\newtheorem{theorem}{Theorem}
\newtheorem{proposition}{Proposition}
\newtheorem{lemma}{Lemma}
\newtheorem{remark}{Remark}
\definecolor{myblue}{gray}{0.9}
\newcommand{\thickhline}{%
    \noalign {\ifnum 0=`}\fi \hrule height .5pt
    \futurelet \reserved@a \@xhline
}
\newcolumntype{"}{@{\hskip\tabcolsep\vrule width 1pt\hskip\tabcolsep}}
\newtheorem{example}{Example}
\begin{document}

\title{ On Characterization of Elementary Trapping Sets of Variable-Regular LDPC Codes }

%
\author{Mehdi~Karimi,
      and Amir~H.~Banihashemi\\


\authorblockA{Department of Systems and Computer Engineering, Carleton
University, Ottawa, Ontario, Canada }}
\maketitle

\thispagestyle{empty}

 \linespread{1.6}
 \selectfont

\begin{abstract}
In this paper, we study the graphical structure of elementary trapping sets (ETS) of variable-regular low-density parity-check (LDPC) codes.
ETSs are known to be the main cause of error floor in LDPC coding schemes.
For the set of LDPC codes with a given variable node degree $d_l$ and girth $g$, we identify all
the non-isomorphic structures of an arbitrary class of $(a,b)$ ETSs, where $a$ is the number of variable nodes and
$b$ is the number of odd-degree check nodes in the induced subgraph of the ETS.
Our study leads to a simple characterization of dominant classes of ETSs (those with relatively small values of $a$ and $b$)
based on short cycles in the Tanner graph of the code. For such classes of ETSs, we prove that
any set ${\cal S}$ in the class is a layered superset (LSS) of a short cycle, where the term ``layered'' is used
to indicate that there is a nested sequence of ETSs that starts from the cycle and grows, one variable node at a time, to
generate ${\cal S}$. This characterization corresponds to a simple search algorithm that starts from the short cycles of the graph
and finds all the ETSs with LSS property in a guaranteed fashion.
Specific results on the structure of ETSs are presented for $d_l = 3, 4, 5, 6$, $g = 6, 8$ and $a, b \leq 10$ in this paper.
The results of this paper can be used for the error floor analysis and for the design of LDPC codes with low error floors.
\end{abstract}

\section{introduction}
\IEEEPARstart{T}{he} performance of low-density parity-check (LDPC) codes under iterative decoding algorithms in the error floor region is
closely related to the problematic structures of the code's Tanner graph~\cite{Di2002},~\cite{Mackay2002},~\cite{Richardson2003},~\cite{Vontobel2005},~\cite{Laendner2005},~\cite{Milenkovict2007},~\cite{XB-07},~\cite{XB-09},~\cite{Dolecek2007},~\cite{Cole2008},~\cite{Dolecek2009},~\cite{Laendner2009},~\cite{Zhang2009}.
Following the nomenclature of~\cite{Richardson2003}, here, we collectively refer to such structures as {\em trapping sets}.
The most common approach for classifying the trapping sets is by a pair $(a,b)$, where $a$ is the size of the trapping set and $b$ is the number of
odd-degree (unsatisfied) check nodes in the subgraph induced by the set in the Tanner graph of the code.
Among the trapping sets, the so-called \textit{elementary trapping sets (ETS)} are known to be the main
culprits~\cite{Richardson2003}, \cite{Laendner2005}, \cite{Cole2008}, \cite{Milenkovict2007}, \cite{Laendner2009}, \cite{Zhang2009}.
These are trapping sets whose induced subgraph only contains check nodes of degree one or two.

For a given LDPC code, the knowledge of dominant trapping sets, i.e., those that are most harmful, is important.
Such knowledge can be used to estimate the error floor~\cite{Cole2008}, to modify the decoder
to lower the error floor~\cite{Cavus2005},~\cite{Han2008},~\cite{KW-12}, or
to design codes with low error floor~\cite{Ivkovic2008}, \cite{ABAIT2011}.

While the knowledge of dominant trapping sets is most helpful in the design and analysis of LDPC codes,
attaining such knowledge is generally a hard problem~\cite{McGregor2008}.
Much research has been devoted to devising efficient search algorithms for finding small (dominant) trapping sets, see, e.g.,
~\cite{Wang2007}, \cite{Cole2008}, \cite{Rosnes2009}, \cite{XB-09}, \cite{Wang2009}, \cite{Abu2010}, \cite{KW-12}, \cite{KB-12},
and to the (partial) characterization of such sets~\cite{Dolecek2007_1},~\cite{Vasic2009}~\cite{Dolecek2010},~\cite{Laendner2010},~\cite{Huang2011},~\cite{Diao2012}.
Asymptotic analysis of trapping sets has also been carried out in~\cite{Milenkovict2007},~\cite{Abu2008},~\cite{Pusane2009},~\cite{Koller2009},~\cite{Kliewer2009},~\cite{Dolecek2010}.

Laendner \emph{et al.} \cite{Laendner2010} studied the characterization of small $(a,b)$ trapping sets of size up to 8 ($a \le 8$)
and $b/a<1$ in LDPC codes from Steiner Triple Systems (STS). STS LDPC codes are a special category of regular LDPC codes with variable node degree 3.
Huang  \emph{et al.} \cite{Huang2011} showed that for a regular LDPC code with variable node degree $\rho$ and check node degree $\gamma$, and girth~$g \ge 6$,
no $(a, b)$ trapping set with $a \le \rho$ and $b \le \gamma$ can exist. They also studied the trapping sets of Euclidean Geometry (EG) LDPC codes
and provided some bounds on the size and the number of unsatisfied check nodes of the trapping sets of such codes \cite{Diao2012}.
An EG-LDPC code with parameter $q$ is a regular LDPC code of length $q^2$, with variable node degree $q+1$ and check node degree $q$.
A consequence of the bounds derived in \cite{Diao2012} is that for the case where $q=2^s$, there is no trapping set
of size smaller than the minimum distance of the code, i.e., $2^s + 2$, with less than  $2^s + 1$ unsatisfied check nodes.
A subset of trapping sets, called {\em absorbing sets}, for array-based LDPC codes with variable node degrees 2, 3 and 4, were studied in \cite{Dolecek2007_1}, \cite{Dolecek2010}.
Absorbing sets are trapping sets in which each variable node is connected to more satisfied than unsatisfied check nodes
in the induced subgraph of the set~\cite{Dolecek2007_1},~\cite{XB-07}. Array-based LDPC codes are a subclass of (regular) protograph LDPC codes
which are constructed by lifting a fully-connected base graph using cyclic permutations. The analysis in \cite{Dolecek2007_1} and \cite{Dolecek2010}
was focused on minimal absorbing sets, i.e., the ones with the smallest size and with the smallest number of unsatisfied check nodes for a given size.
Vasic \emph{et al.} \cite{Vasic2009} studied the topological structure of trapping sets of size up to $8$ in regular LDPC codes with variable node degree 3,
and proposed a hierarchical search method to find them.

The study of the graphical structure of trapping sets so far has been mainly limited to structured codes, codes with certain variable node degrees, and
to relatively small trapping sets. In this work, for the category of variable-regular LDPC codes with a certain variable node degree and a given girth, we
study the topological structure of $(a,b)$ ETSs for given values of $a$ and $b$, and find all the non-isomorphic structures
of such ETSs. A careful examination of these structures, which are independent of the check node degree distribution of the code,
reveals that for relatively small values of $a$ and $b$, the structures are all
layered supersets (LSS) of small cycles, i.e., they can be characterized by a nested sequence of ETSs which starts from a short cycle and
grows to the ETS one node at a time. The LSS property lends itself to a simple search algorithm that starts from short cycles of the code's
Tanner graph as input and can find all the ETSs with LSS property in a guaranteed fashion. Although the general approach discussed here
can be applied to any category of variable-regular LDPC codes with arbitrary variable node degree $d_l$ and girth $g$ and to any class of ETSs with arbitrary values of $a$ and $b$,
the results presented here are for $d_l = 3, 4, 5, 6$, $g = 6, 8$, and $a, b \leq 10$. One of the main advantages of the results
presented here is that they are applicable to specific codes, rather than just to an ensemble or a category of codes. In particular,
the search algorithm based on LSS property can be used to efficiently find the dominant ETSs of a code in a guaranteed fashion.
This, for example, would imply having a faster and more accurate estimation of error floor for the code under consideration
using techniques such as importance sampling. Moreover, the results presented here can be used in the design of LDPC codes
with low error floor. This can be achieved by avoiding certain dominant ETSs in the Tanner graph of the code. In such a context,
this work can help in identifying the dominant ETSs.

It has been known that dominant trapping sets of LDPC codes have a close relationship with short cycles in the code's Tanner graph~\cite{XB-09},~\cite{XBK-12},~\cite{KB-12}.
This work takes a rigorous step in establishing such a relationship. In general, in comparison with existing results on characterization of trapping sets such
as~\cite{Dolecek2007_1},~\cite{Dolecek2010},~\cite{Vasic2009}, the results presented here are more general in terms of being applicable to both structured and random codes,
and to cover a wider range of variable node degrees and trapping set classes.

The remainder of this paper is organized as follows. Basic definitions and notations are provided in Section II.
In Section III, we present and discuss the LSS property. In Section IV, we develop the algorithm which guarantees to find all
the ETSs with the LSS property starting from the short cycles of the Tanner graph. Sections V, VI, VII, and VIII present
the results for variable-regular LDPC codes with variable node degrees 3, 4, 5 and 6, respectively.
As part of the material presented in these sections, we provide the lengths of short cycles that are required for the proposed algorithm
to find all the $(a,b)$ ETSs with LSS property in a guaranteed fashion, for different values of $a$ and $b$. Section IX is devoted
to discussions and conclusions.

\section{Definitions and Notations}

Let $G=(V=L\cup R\,,E)$ be the bipartite graph, or Tanner graph, corresponding to the
LDPC code $\mathcal{C}$, where $L$ is the set of variable nodes, $R$ is the set of check nodes and $E$ is the set of edges.
The notations $L$ and $R$ refer to ``left" and  ``right," respectively, pointing to the side of the bipartite graph
where variable nodes and check nodes are located, respectively. A \emph{cycle} of length $k$ in a graph $G$ is a
non-empty alternating sequence $v_0e_1v_1 \dots v_{k-1}e_kv_k$ of
nodes and edges in $G$ such that $e_i = \{v_{i-1}, v_i\} \in E$, and $v_i \in L \cup R$ for all $1 \le i \le k$, $v_0 = v_k$, and all the other nodes are distinct.
The length of the shortest cycle in a graph $G$ is denoted by $g$ and
is called the {\em girth} of $G$. The degree of a node $v \in L \cup R$ is
denoted by $d(v)\,$. A bipartite graph is called {\em variable-regular} or \emph{left-regular} with left degree $d_l$ if $d(v)=d_l, \,\forall\, v \in L$.

The graphs $G_1=(V_1\,,E_1)$ and $G_2=(V_2\,,E_2)$ are \emph{isomorphic} if there is a  bijection $f :\, V_1 \rightarrow  V_2$ such that nodes
$v_1$, $v_2 \in V_1$ are joined by an edge if and only if $f(v_1)$ and $f(v_2)$ are joined by an edge.

For a subset $\mathcal{S} \subset L\,$, $\Gamma(\mathcal{S})$ denotes the set of neighbors of $\mathcal{S} $ in $R\,$.
The \textit{induced subgraph} of $\mathcal{S}$, represented by $G(\mathcal{S})$, is the graph containing nodes $\mathcal{S}\cup \Gamma(\mathcal{S})$ with
edges $\{ (u,v) \in E: u \in \mathcal{S} ,\, v\in \Gamma (\mathcal{S})  \}$. The set of check nodes in $\Gamma(\mathcal{S})$ with odd
degree in $G(\mathcal{S})$ is denoted by $\Gamma_{\mathrm{o}}(\mathcal{S})$. Similarly, $\Gamma_{\mathrm{e}}(\mathcal{S})$ represents the set of check nodes in $\Gamma(\mathcal{S})$ with even degree in $G(\mathcal{S})$.
In this paper, we use the terms \textit{satisfied check nodes} and
\textit{unsatisfied check nodes} to refer to the check nodes in $\Gamma_{\mathrm{e}}(\mathcal{S})$ and $\Gamma_{\mathrm{o}}(\mathcal{S})$, respectively.

Given a Tanner graph $G=(L\cup R\,,E)$, a set $\mathcal{S} \subset L$ is called an $(a,b)$ {\em trapping set} if $|\mathcal{S}|=a$ and $|\Gamma_{\mathrm{o}}(\mathcal{S})|=b$.
The integer $a$ is referred to as the {\em size} of the trapping set $\mathcal{S}$.  We also refer to all the trapping sets with
the same parameters $a$ and $b$ as a \emph{class} of trapping sets. An $(a,b)$ trapping set $\mathcal{S}$ is called {\em elementary} if all
the check nodes in $G(\mathcal{S})$ have degree one or two.  A set $\mathcal{S} \subset L$ is called an $(a,b)$ \emph{absorbing set} if $\mathcal{S}$ is an $(a,b)$ trapping set
and if all the nodes in $\mathcal{S}$ are connected to more nodes in $\Gamma_{\mathrm{e}}(\mathcal{S})$ than to nodes in $\Gamma_{\mathrm{o}}(\mathcal{S})$.
Trapping sets with smaller values of $a$ and $b$ are generally more harmful to iterative decoding. Loosely speaking, such trapping sets are called {\em dominant}.

Without loss of generality, we assume that the induced subgraph of a trapping set is connected.
Disconnected trapping sets can be considered as the union of connected ones. Moreover, to the best of our knowledge, almost all the structures
reported as dominant trapping sets (of regular LDPC codes) in the literature have the property that every variable node is
connected to at least two satisfied check nodes in the induced subgraph. We thus focus on trapping sets with this property. In the rest of the paper, we use the notation $\mathcal{T}$ to denote the set of all trapping sets $\mathcal{S} $ in a graph
$G$ whose induced subgraph $G(\mathcal{S})$ is connected and for which every node $v \in \mathcal{S}$ is connected to at least
two nodes in $\Gamma_{\mathrm{e}}(\mathcal{S})$.  In the following, we also assume that
the Tanner graph $G$ has no parallel edges.

\section{Non-isomorphic structures of ETSs}

Elementary trapping sets (ETS) in left-regular Tanner graphs are the main focus of this paper.
To investigate the structure of ETSs of a certain $(a,b)$ class in left-regular Tanner graphs with left-degree $d_l$
and girth $g$, we need to obtain all the non-isomorphic graphical structures of such trapping sets.
To simplify the representation of the subgraph induced by an ETS in a left-regular graph, we often use an alternate graphical representation
called \emph{normal} graphs~\cite{Koetter2004} .
The normal graph of an ETS is obtained from the induced subgraph of the set by removing all the degree-1 check nodes and their edges
from the subgraph, and by replacing each degree-2 check node with an edge.
\begin{example}
Figures~\ref{normal}$(a)$ and \ref{normal}$(b)$ represent the induced subgraph and the normal graph of a $(5,4)$ ETS
in a left-regular graph with $d_l = 4$ and $g = 6$, respectively. In Figure~\ref{normal}$(a)$, and the rest of the paper, variable nodes
are represented by circles, and satisfied and unsatisfied check nodes are shown by empty and full squares, respectively.
\begin{figure}[!h]
\centering
\includegraphics[width=3in]{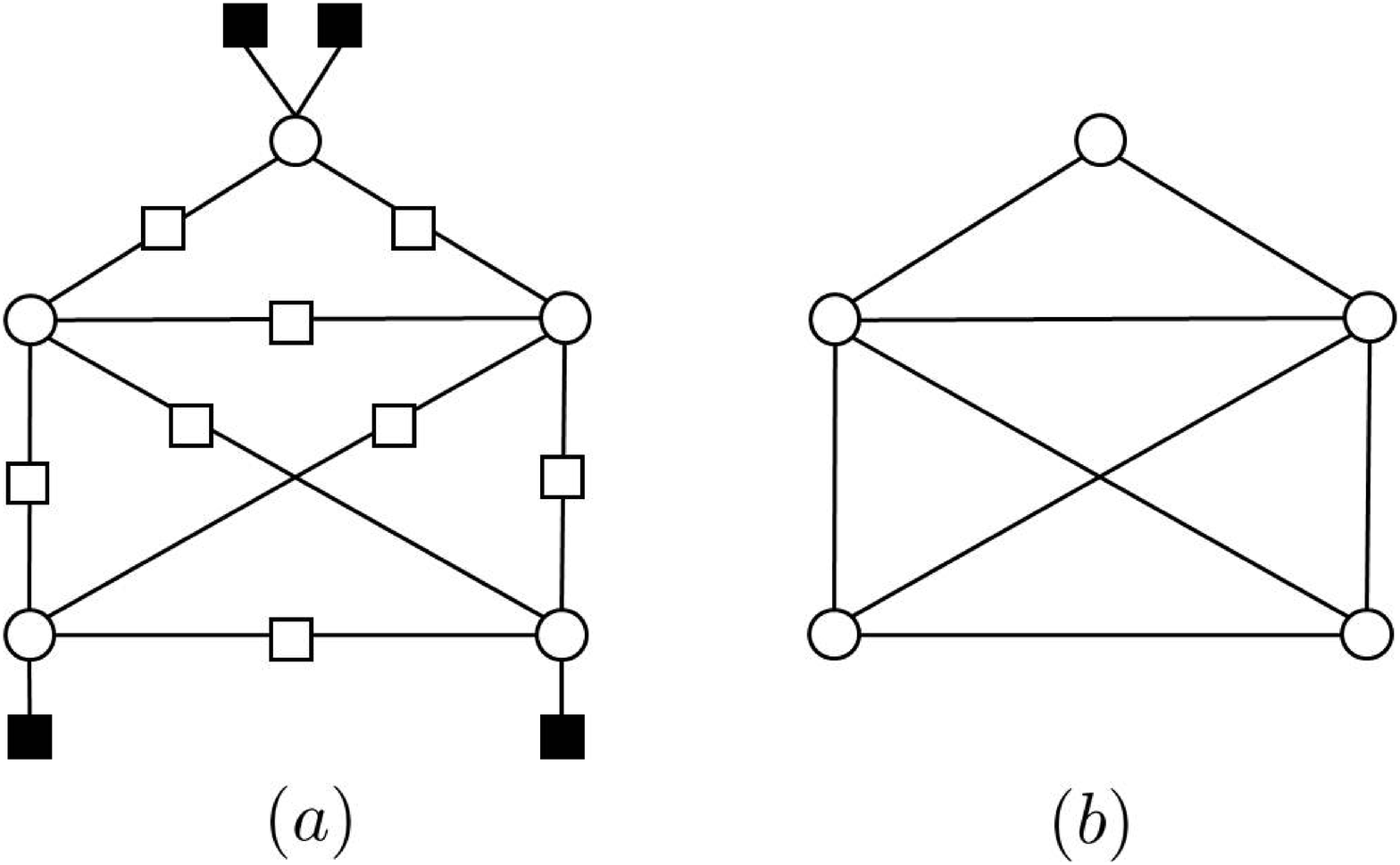}
\caption{A $(5,4)$ elementary trapping set and its normal graph.}
\label{normal}
\end{figure}
\end{example}

It is easy to see that there is a one-to-one correspondence between an ETS of a left-regular graph and its normal graph.
Given $d_l$, one can construct the subgraph of the ETS from a given normal graph by replacing each edge $\{u,v\}$ of the
normal graph with two edges $\{u,c\}$ and $\{c,v\}$, where $c$ is a degree-2 check node
which is also added to the graph, and by connecting $d_l - d(v)$ check nodes of degree one to every variable node $v$ with $d(v) < d_l$.

In the following, we provide an example to demonstrate how all the non-isomorphic structures of a class of ETSs
can be found for rather small values of $d_l$, $a$ and $b$.

\begin{proposition}
Any $(6,2)$ ETS of a left-regular LDPC code with $d_l = 4$ and $g = 6$ has one of the structures presented in Figure~\ref{6_2_G}.
\label{prop1}
\end{proposition}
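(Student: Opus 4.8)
The plan is to work entirely with normal graphs, exploiting the one-to-one correspondence established above: classifying $(6,2)$ ETSs with $d_l=4$, $g=6$ is equivalent to classifying the admissible normal graphs on $a=6$ variable nodes whose parameters are consistent with $b=2$ unsatisfied checks, a left-degree of $4$, girth $g=6$, and the standing assumptions that the induced subgraph is connected and every variable node sees at least two satisfied (degree-2) check nodes. First I would translate these constraints into numerical conditions on the normal graph $N$. Since each degree-2 check node becomes an edge of $N$ and each variable node $v$ carries $d_l-d_N(v)$ degree-1 checks, the total number of degree-1 checks is $\sum_{v}(d_l - d_N(v)) = 4a - 2|E(N)| = 24 - 2|E(N)|$, and this must equal $b=2$; hence $|E(N)| = 11$. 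The requirement that each variable node be connected to at least two \emph{satisfied} checks means $d_N(v)\ge 2$ for every $v$ (a variable of degree $0$ or $1$ in $N$ would violate this or disconnect the graph), while $d_N(v)\le d_l = 4$. So $N$ is a connected graph on $6$ vertices, with $11$ edges, all degrees in $\{2,3,4\}$, total degree $22$, and — crucially — girth at least $g/2 = 3$ when measured in $N$ (a cycle of length $2k$ in the Tanner graph through only degree-2 checks corresponds to a cycle of length $k$ in $N$, and shorter configurations are excluded; I would also need to rule out the parallel-edge / multi-edge cases, since $g=6$ forbids two variable nodes sharing two degree-2 checks, so $N$ is a simple graph).

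Next I would enumerate all simple connected graphs on $6$ labelled-then-unlabelled vertices with exactly $11$ edges and degree sequence drawn from $\{2,3,4\}^6$ summing to $22$. The feasible degree sequences are the partitions of $22$ into six parts from $\{2,3,4\}$: namely $(4,4,4,4,4,2)$, $(4,4,4,4,3,3)$ — and those are the only two, since to reach sum $22$ with six parts capped at $4$ we need the deficit below $24$ to be exactly $2$, realized either as one part equal to $2$ or two parts equal to $3$. For each degree sequence I would list all non-isomorphic simple graphs realizing it (a small finite check — there are only a handful), then discard any that are disconnected or that contain a triangle or, more precisely, any configuration forcing a Tanner-graph cycle shorter than $6$; a triangle in $N$ is a $6$-cycle in the Tanner graph, so triangles are actually \emph{allowed} here, but I must still verify $g=6$ is attainable, i.e. that no unavoidable length-$4$ cycle (two vertices joined by a double edge) is present — which is automatic for simple $N$. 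The surviving normal graphs, redrawn as induced subgraphs by reinstating degree-2 checks on edges and degree-1 checks to pad each variable up to degree $4$, are exactly the structures claimed in Figure~\ref{6_2_G}.

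The main obstacle I anticipate is the graph-enumeration bookkeeping: showing that the list of non-isomorphic simple connected $6$-vertex, $11$-edge graphs with the prescribed degree sequences is \emph{complete} and that no two entries are secretly isomorphic. This is routine in principle but error-prone, and the cleanest way to organize it is to fix the $4$-regular-minus-one-edge skeleton: the complete graph $K_6$ has $15$ edges and is $5$-regular, so an $11$-edge graph on $6$ vertices is the complement of a $4$-edge graph on $6$ vertices, and I would instead enumerate the (few) non-isomorphic $4$-edge graphs on $6$ vertices, take complements, and read off degrees — this converts the problem into classifying small sparse graphs, which is entirely manageable. A secondary check is confirming that each candidate genuinely yields an ETS in $\mathcal{T}$ after the padding step, i.e. connectivity is preserved (it is, since $N$ is connected and padding only adds pendant checks) and every variable still sees $\ge 2$ degree-2 checks (guaranteed by $d_N(v)\ge 2$). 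Assembling these pieces gives the proposition.
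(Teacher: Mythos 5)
Your proposal is correct, and the numerical setup ($|E(N)|=11$, all normal-graph degrees in $\{2,3,4\}$, simplicity of $N$ forced by $g=6$, the two feasible degree sequences $(4,4,4,4,4,2)$ and $(4,4,4,4,3,3)$) exactly mirrors the paper's case split between the two unsatisfied checks meeting one variable node or two. Where you genuinely diverge is in how completeness of the enumeration is certified: the paper grows each normal graph by hand from a root vertex and argues that ``from this point on there is no option in connecting the nodes,'' which is correct but ad hoc and leaves one subcase ``similar and thus omitted.'' Your complementation trick --- an $11$-edge graph on $6$ vertices is the complement in $K_6$ of a $4$-edge graph whose degree sequence must be $(3,1,1,1,1,1)$ or $(2,2,1,1,1,1)$, giving exactly $K_{1,3}\cup K_2$, $P_4\cup K_2$, and $2P_3$ --- makes exhaustiveness essentially self-evident and also yields connectivity for free (a disconnected graph on $6$ vertices has at most $10$ edges). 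The paper's approach buys a constructive picture of each structure that generalizes to its later Algorithm-1/LSS discussion, since the ``growth'' argument is the same mechanism the expansion routine uses; your approach buys a shorter, more easily verified completeness proof, though it is specific to the lucky circumstance that $11$ is close to $\binom{6}{2}=15$ and would not scale to, say, the $(6,6)$ class with $9$ edges, where the complement has $6$ edges and the enumeration is no longer trivial. One small simplification you could note: for $b=2$ and $d_l=4$ the condition $d_N(v)\ge 2$ is automatic, since each vertex contributes $4-d_N(v)\ge 0$ to $b$ and a single vertex of degree $\le 1$ would already force $b\ge 3$.
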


\begin{figure}[!h]
\centering
\includegraphics[width=5.2in]{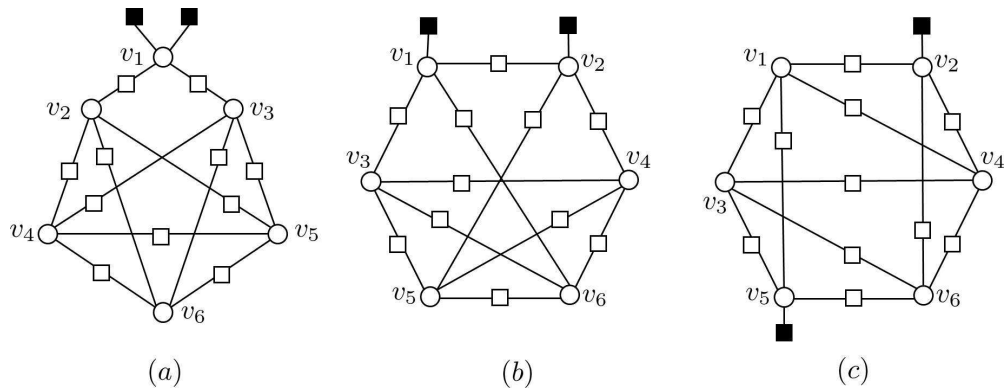}
\caption{All the possible non-isomorphic topologies for $(6,2)$ ETSs in left-regular LDPC codes with $d_l=4$ and $g=6$.}
\label{6_2_G}
\end{figure}
\begin{proof}
We use the normal graph representation, and prove that any $(6,2)$ ETS in a left-regular LDPC code with $d_l=4$ and $g=6$
has one of the normal graph representations given in Figure~\ref{6_2_NG}.

\begin{figure}[!h]
\centering
\includegraphics[width=5in]{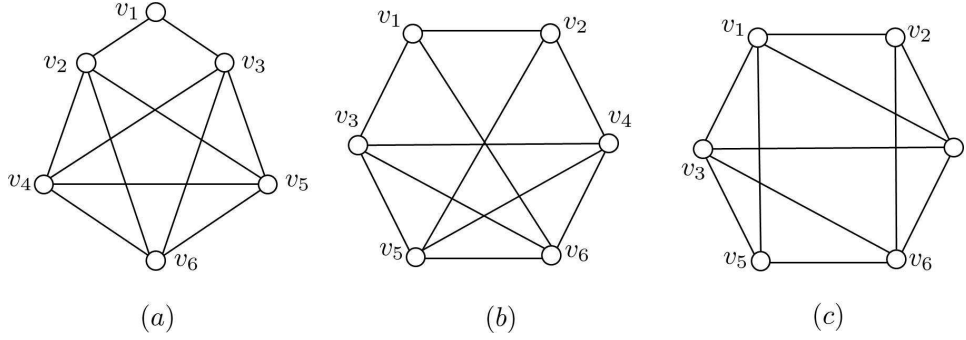}
\caption{All the possible non-isomorphic normal graphs of $(6,2)$ ETSs in left-regular LDPC codes with $d_l=4$ and $g = 6$.}
\label{6_2_NG}
\end{figure}

There are only two possibilities for $(6,2)$ ETSs: (i) Both unsatisfied check nodes are connected to the same variable node, and
(ii) the two unsatisfied check nodes are connected to two different variable nodes. We show that there is only one structure for the
first case and only 2 non-isomorphic structures for the second case.

For Case (i), let $v_1$ be the only variable node connected to two unsatisfied check nodes. This means that in the normal graph representation,
$v_1$ is connected to only two other variable nodes. Suppose that those variable nodes are $v_2$ and $v_3$. Since the degree of all the other variable nodes
$v_4$, $v_5$ and $v_6$ in the normal graph is 4, each of them must have 4 distinct neighbors. The only possibility is for each
of these three nodes to be connected to the other two nodes in addition to nodes $v_2$ and $v_3$.
This also satisfies the degree requirements for $v_2$ and $v_3$ (each of them has degree 4 and is connected to all the nodes $v_4$, $v_5$, $v_6$ and $v_1$).
This topology is shown in Figure~\ref{6_2_NG}$(a)$.

For Case (ii), let $v_1$ and $v_2$ be the two variable nodes, each connected to one unsatisfied check node.
Since all the other 4 nodes have degree 4 and must be connected to 4 distinct nodes, each of them must be connected
to at least one of the nodes $v_1$ and $v_2$. There are two possibilities: a) $v_1$ and $v_2$ are connected together,
and b) $v_1$ and $v_2$ are not connected together. One can show that the only possible topology for Case (a) is the one shown in Figure~\ref{6_2_NG}$(b)$,
and the only possible topology for Case (b) is the one shown in Figure~\ref{6_2_NG}$(c)$.
For Case (a), there always exists a node $v_3$ which is connected to one of the nodes $v_1$ and $v_2$, but not to both.
Otherwise, if all the other 4 nodes are connected to both $v_1$ and $v_2$, both  $v_1$ and $v_2$ will have degree 5 which contradicts
the assumption of the proposition that $d_l = 4$. Without loss of generality, we assume $v_3$ to be connected to $v_1$, and
we grow the normal graph from $v_3$ as the root. Figure~\ref{ab_grow}$(a)$ shows such a graph growth, in which all the nodes except
$v_2$ are located in the first layer and $v_2$ is in the second layer. Node $v_2$ must be connected to two nodes other than $v_1$.
Without loss of generality, we assume those nodes to be $v_4$ and $v_5$ (Figure~\ref{ab_grow}$(b)$). From this point on,
there is no option in connecting the nodes. Node $v_6$ must be connected to 4 distinct nodes, and the only possibility is to
have it connected to all the nodes $v_4$, $v_5$ and $v_1$. This will satisfy the degree of $v_1$, which must be 3.
To satisfy the degrees of $v_4$ and $v_5$ ($d(v_4)=d(v_5)=4$), the only option is to have a connection between them,
which results in the topology shown in Figure~\ref{6_2_NG}$(b)$. The proof that Case (b), i.e., the case where $v_1$ and $v_2$
are not connected, will result in the topology of Figure~\ref{6_2_NG}$(c)$ is similar and is thus omitted.

\begin{figure}[!h]
\centering
\includegraphics[width=3in]{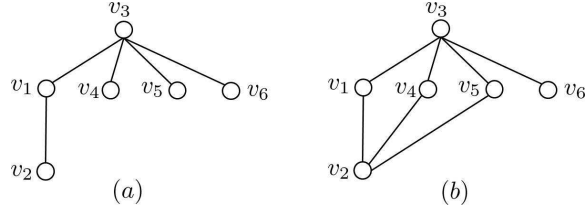}
\caption{Growing the normal graph for Case ii(a) in the proof of Proposition~\ref{prop1}.}
\label{ab_grow}
\end{figure}

\end{proof}

\begin{remark}
The structures in Figures~\ref{6_2_G}$(b)$ and~\ref{6_2_G}$(c)$ are absorbing sets while the one in Figure~\ref{6_2_G}$(a)$ is not.
Moreover, Proposition~\ref{prop1} implies that there are no $(6,2)$ ETSs in left-regular graphs with $d_l = 4$ and $g > 6$.
\end{remark}

Finding the non-isomorphic structures for ETSs with rather large values of $d_l$, $a$ or $b$ can be a formidable task.
We thus resort to software programs to find such structures. One of the well-known software programs related to
graph isomorphism is the \emph{nauty} program~\cite{Online}. This program can be used to efficiently generate all
the non-isomorphic graphs with a given number of nodes (up to 32) and a given number of edges.
The program has many input options to determine the minimum and maximum values of the node degrees,
to select the girth, and to generate only bipartite graphs or all the possible graphs. For the case of bipartite graphs, however, the program does not
have the option of taking the degree distribution of each part of the graph as an input. In the case of finding the non-isomorphic structures (for induced subgraphs) of ETSs,
this limitation results in having a large number of undesired structures at the output of the program. One is thus required to
check all the output structures to find the ones that satisfy the particular degree distributions of the class of ETSs under consideration.
This difficulty can be circumvented by using the normal graph representation of ETSs as explained in the following example.

\begin{example}
Consider the class of $(6,6)$ ETSs in a left-regular graph with $d_l = 4$ and $g = 6$.
Figure~\ref{66_normal}$(a)$ shows one such ETS.

\begin{figure}[!h]
\centering
\includegraphics[width=2.8in]{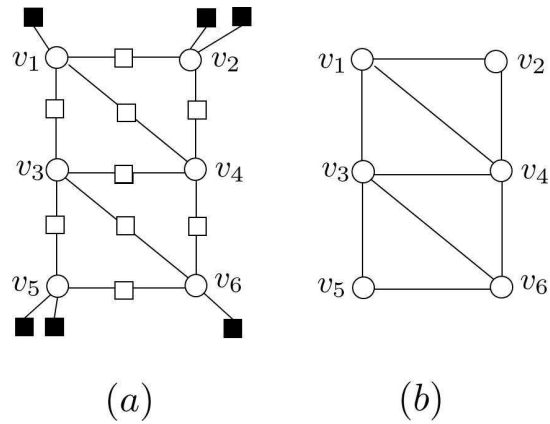}
\caption{A $(6,6)$ elementary trapping set and its normal graph.}
\label{66_normal}
\end{figure}

It is easy to see that the induced subgraph of any ETS in this class has 6 variable nodes of degree 4,
6 check nodes of degree 1, and $((6\times 4)-6)/2=9$ check nodes of degree 2.  This is a total of 21 nodes (check and variable nodes) and 24 edges.
To generate the non-isomorphic structures of $(6,6)$ ETSs using \emph{nauty} package, one way is to set the package parameters to generate
all the connected  bipartite graphs of girth at least 6 with 21 nodes and 24 edges, and with the minimum and maximum degrees of 1 and 4, respectively.
This results in $53,727,932$ graphs, from which an overwhelming majority are not $(6,6)$ ETSs.

Alternatively, we can use the normal graph representation of $(6,6)$ ETSs. Figure~\ref{66_normal}$(b)$ shows the normal graph of
the structure shown in Figure~\ref{66_normal}$(a)$. Similar to Figure~\ref{66_normal}$(b)$, any normal graph of a $(6.6)$ ETS
in a left-regular graph with $d_l = 4$ has 6 nodes and 9 edges. To generate all the non-isomorphic normal structures using the \emph{nauty} program,
one needs to generate all the bipartite and  non-bipartite graphs with 6 nodes and 9 edges, and with the minimum and maximum node degrees of 2 and 4, respectively.
This reduces the number of possible graphs from  $53,727,932$ to only 11. All the 11 graphs correspond to valid normal
structures for the class of ETSs under consideration. These structures are shown in Figures~\ref{66_4_c}, \ref{66_4_a}, \ref{66_4_b}, and \ref{66_4_d}.
Among the 11 structures, only 2 structures are absorbing sets. These are shown in Figure~\ref{66_4_c}.
The other structures are grouped together based on the number of variable nodes with two unsatisfied check nodes.
The structures with 1, 2, and 3 variable nodes connected to two unsatisfied check nodes are shown in Figures~\ref{66_4_a}, \ref{66_4_b}, and \ref{66_4_d}, respectively.

\begin{figure}[!h]
\centering
\includegraphics[width=3.7in]{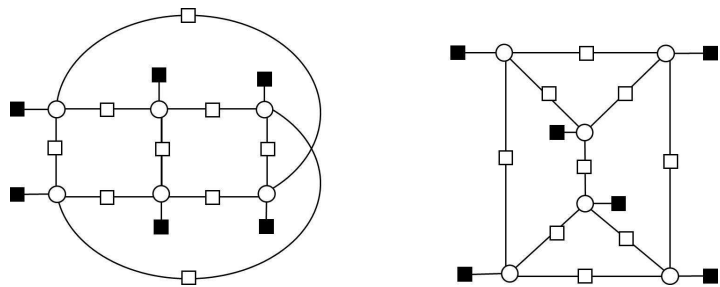}
\caption{Possible topologies for a $(6,6)$ ETS in a left-regular graph with $d_l = 4$ and $g=6$: the only two possible absorbing set topologies.}
\label{66_4_c}
\end{figure}

\begin{figure}[!h]
\centering
\includegraphics[width=4.4in]{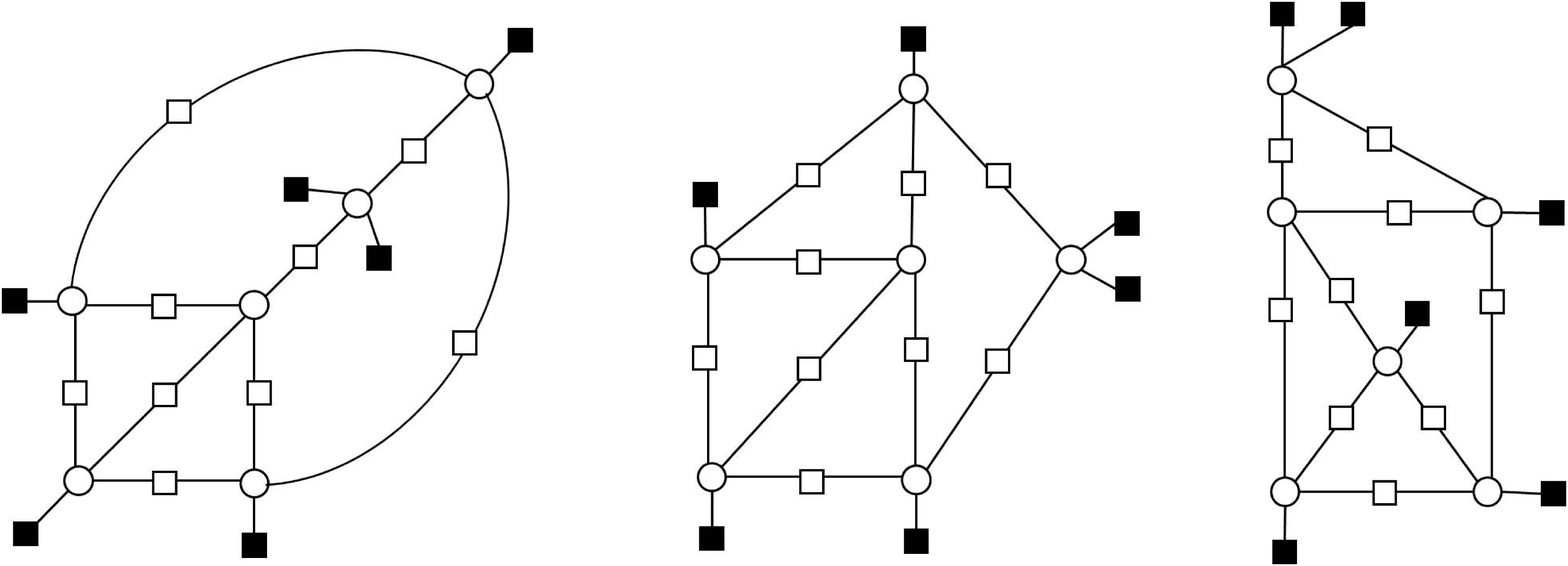}
\caption{Possible topologies for a $(6,6)$ ETS in a left-regular graph with $d_l = 4$ and $g=6$: topologies with only one variable node connected to two unsatisfied check nodes.}
\label{66_4_a}
\end{figure}

\begin{figure}[!h]
\centering
\includegraphics[width=4.4in]{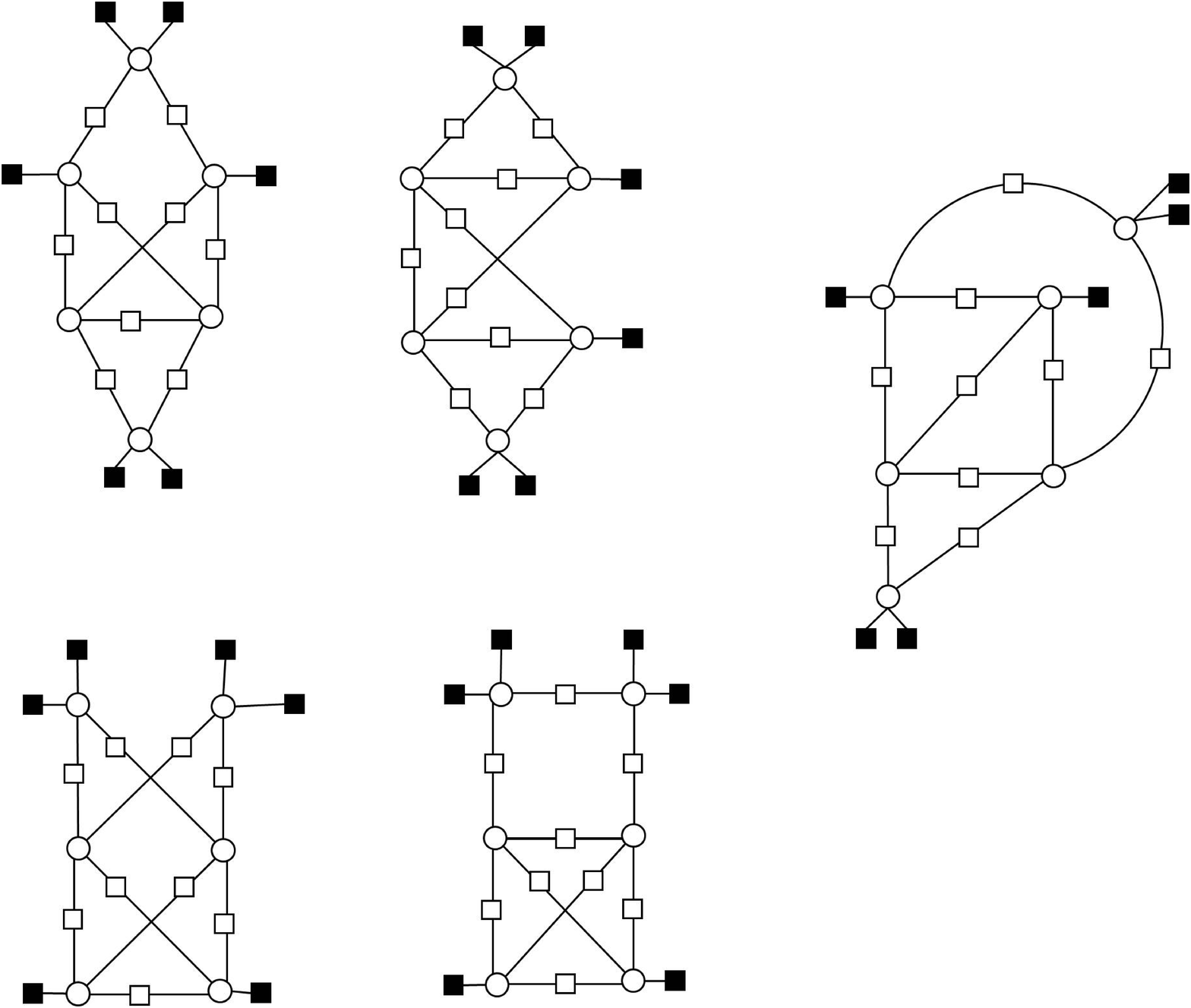}
\caption{Possible topologies for a $(6,6)$ ETS in a left-regular graph with $d_l = 4$ and $g=6$: topologies with only two variable nodes connected to two unsatisfied check nodes. }
\label{66_4_b}
\end{figure}

\begin{figure}[!h]
\centering
\includegraphics[width=1.7in]{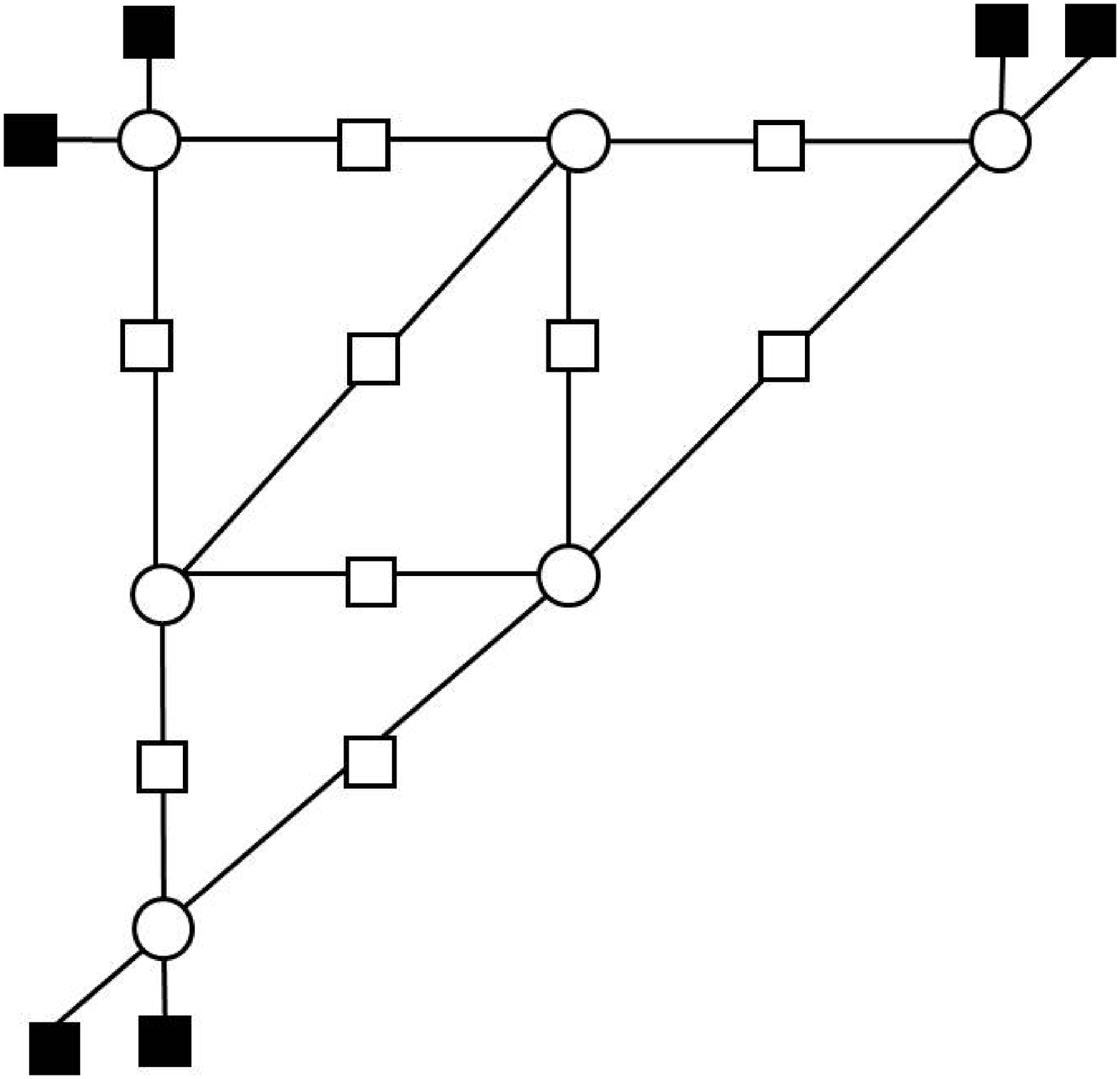}
\caption{Possible topologies for a $(6,6)$ ETS in a left-regular graph with $d_l = 4$ and $g=6$: the only possible topology with three variable nodes
connected to two unsatisfied check nodes.}
\label{66_4_d}
\end{figure}

\end{example}

\section{Layered superset (LSS) property}

Consider an $(a, b)$ ETS $\mathcal{S}$ in $\mathcal{T}$. Let $C \subset \mathcal{S}$ be an ETS in $\mathcal{T}$ of size
$\alpha < a$. We say that $\mathcal{S}$ is a {\em layered superset (LSS)} of $C$ if there exists a nested sequence of
ETSs: $C \stackrel{\Delta}{=} {\cal S}^{(0)} \subset  {\cal S}^{(1)} \subset \cdots \subset {\cal S}^{(a - \alpha)} \stackrel{\Delta}{=} {\cal S}$, such that
${\cal S}^{(i)} \in \mathcal{T}$ has size $\alpha + i$ for $i = 0, \ldots, a - \alpha$. When there is no risk of confusion, we also
refer to ${\cal S}$ as having the LSS property.

\begin{example}
Consider the ETS $\mathcal{S}=\{v_1, v_2, v_3, v_4, v_5, v_6\}$ of a left-regular LDPC code with $d_l = 4$ shown in Figure~\ref{LSS_ex}$(a)$,
and one of its subsets $C_1=\{v_1,v_2,v_3\}$.
\begin{figure}[!h]
\centering
\includegraphics[width=5.3in]{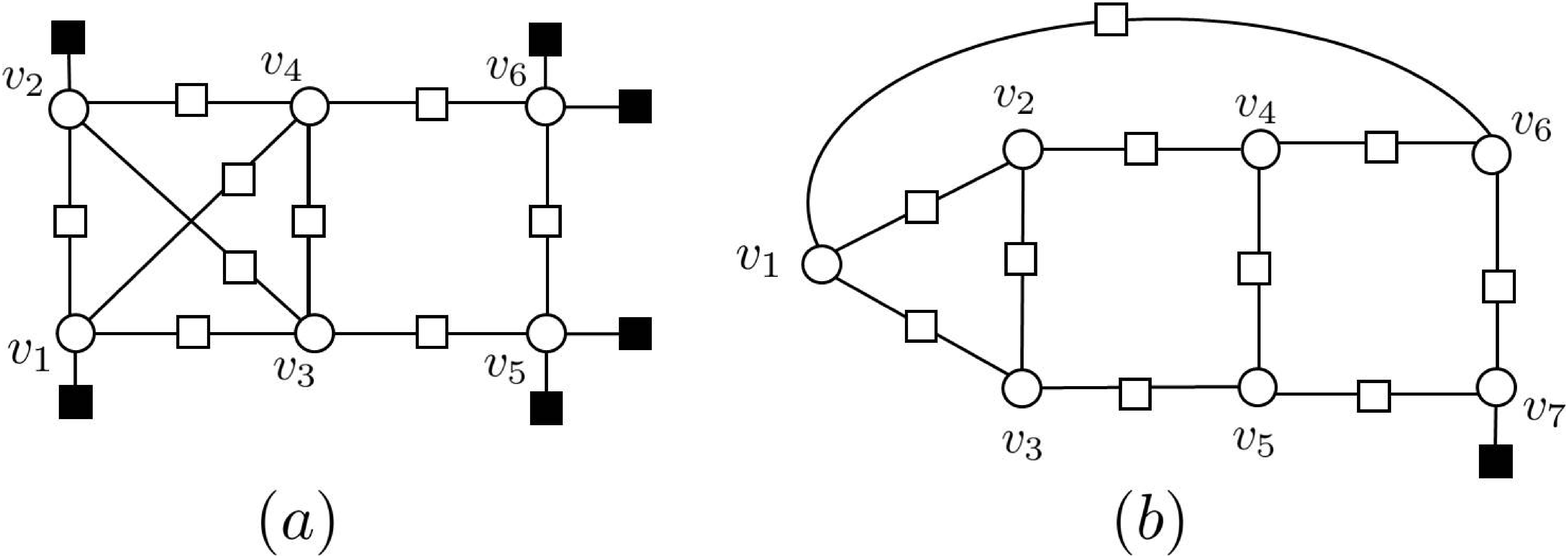}
\caption{(a) A $(6,6)$ ETS in a left-regular graph with $d_l = 4$, (b) A $(7,1)$ ETS in a left-regular graph with $d_l = 3$.}
\label{LSS_ex}
\end{figure}
It is easy to see that $\mathcal{S}$ and $C_1$ are $(6,6)$ and $(3,6)$ ETSs. Careful inspection of Figure~\ref{LSS_ex}$(a)$ reveals that ${\cal S}$
is not an LSS of $C_1$. Set $\mathcal{S}$ however, is an LSS of $C_2 = \{v_3, v_4, v_5, v_6\}$ with the following nested sequence of ETSs:
$C_2 \subset \{v_1, v_3, v_4, v_5, v_6\} \subset {\cal S}$. Figure~\ref{LSS_ex}$(b)$ shows a $(7,1)$ ETS ${\cal S}$ in a left-regular code with $d_l = 3$.
This set has LSS property with respect to $C = \{v_2, v_3, v_4, v_5\}$ with the following nested sequence of ETSs:
$C \subset \{v_1, v_2, v_3, v_4, v_5\} \subset \{v_1, v_2, v_3, v_4, v_5, v_6\} \subset {\cal S}$.
\end{example}

One should note that any cycle in a Tanner graph is an ETS. In the sequel, we are particularly interested in
the LSS property of more complex ETSs with respect to short cycles of the graph. The following proposition is an example.

\begin{proposition}
All the $(6,2)$ ETSs of a left-regular LDPC code with $d_l = 4$ and $g = 6$ are layered supersets of any one of their 6-cycle subsets.
\label{prop2}
\end{proposition}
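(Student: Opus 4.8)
The plan is to leverage Proposition~\ref{prop1}, which already enumerates all three non-isomorphic topologies of a $(6,2)$ ETS (Figures~\ref{6_2_G}$(a)$, $(b)$, $(c)$), so that the claim reduces to a finite check: for each of the three structures, and for each $6$-cycle $C$ contained in it, exhibit an explicit nested sequence $C = \mathcal{S}^{(0)} \subset \mathcal{S}^{(1)} \subset \mathcal{S}^{(2)} \subset \mathcal{S}^{(3)} = \mathcal{S}$ in which every intermediate set $\mathcal{S}^{(i)}$ lies in $\mathcal{T}$, i.e. is connected and has every variable node attached to at least two even-degree (satisfied) check nodes in its own induced subgraph. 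First I would record, for each of the three normal graphs, which $3$-subsets of variable nodes form a $6$-cycle (in the normal graph a $6$-cycle is a triangle $\{v_i,v_j,v_k\}$ with all three edges present). Then, working in the normal-graph picture, I would add the three remaining variable nodes one at a time, each time choosing the next node so that it has at least two neighbours already in the current set; this guarantees that in the induced subgraph of $\mathcal{S}^{(i)}$ the newly added node sees at least two degree-$2$ check nodes, and — because $d_l = 4$ forces each earlier node that gains a new normal-graph neighbour to still retain enough degree-$2$ check nodes — the $\mathcal{T}$-membership is preserved throughout. Connectedness is automatic since each added node attaches to the current set.

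Concretely, I would argue case by case. For the structure of Figure~\ref{6_2_G}$(a)$ (Case (i) of Proposition~\ref{prop1}), the nodes $v_4,v_5,v_6$ are mutually adjacent and each is also adjacent to $v_2,v_3$; so $\{v_4,v_5,v_6\}$ is a $6$-cycle, and I would grow $\{v_4,v_5,v_6\} \subset \{v_2,v_4,v_5,v_6\} \subset \{v_2,v_3,v_4,v_5,v_6\} \subset \mathcal{S}$, noting $v_2$ has two neighbours ($v_4,v_5$, say) in the first step, $v_3$ likewise, and $v_1$ has two neighbours ($v_2,v_3$) at the last step. Any other $6$-cycle in this structure is of the same shape up to the obvious symmetry, or is $\{v_1,v_2,v_j\}$/$\{v_1,v_3,v_j\}$ type, and I would handle those by a symmetric choice of growth order. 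For the two structures of Figures~\ref{6_2_G}$(b)$ and $(c)$ I would do the analogous bookkeeping, using the explicit adjacency described in the proof of Proposition~\ref{prop1} (for Case ii(a) the graph grown in Figure~\ref{ab_grow}); in each case the triangles are easy to list and a valid one-node-at-a-time growth order is readily found because every node has normal-graph degree $3$ or $4$, leaving ample room to always pick a next node with two existing neighbours.

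The only real obstacle is purely organisational: making sure that \emph{every} $6$-cycle subset — not just one convenient choice — admits such a sequence, and verifying the ``at least two satisfied check nodes'' condition for each intermediate set without tacitly using a check node that has become degree-$1$ or has been promoted to an unsatisfied check. I expect this to come down to the observation that along any valid growth order, a variable node already present either keeps all its satisfied check nodes or converts exactly the degree-$1$ checks shared with the newly added node into degree-$2$ checks, which only helps; and since each of $v_1,\dots,v_6$ starts (in $\mathcal{S}$) with at least two satisfied checks and $d_l=4$ bounds the number of normal-graph edges at each node, one can always reach $\mathcal{S}$ through sets in $\mathcal{T}$. I would close by remarking that this finite verification is exactly the kind of search the LSS-based algorithm of Section IV performs automatically, which is the point of the proposition.
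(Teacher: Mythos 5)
Your proposal is correct and takes essentially the same route as the paper: it invokes Proposition~\ref{prop1} to reduce the claim to the three non-isomorphic structures of Figure~\ref{6_2_G}, and then performs a finite check that each structure grows, one variable node at a time within $\mathcal{T}$, from every one of its $6$-cycle subsets (the paper merely asserts this check is ``not difficult to see'' and exhibits one sample nested sequence, while you spell out the bookkeeping). The only minor slip is that in the structure of Figure~\ref{6_2_G}$(a)$ no $6$-cycle actually contains $v_1$, so the $\{v_1,v_2,v_j\}$-type cases you list are vacuous; this does not affect the validity of the argument.
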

\begin{proof}
Based on Proposition~\ref{prop1}, there are only 3 non-isomorphic structures for $(6,2)$ ETSs in left-regular LDPC codes with $d_l = 4$ and $g = 6$.
These structures are shown in  Figure~\ref{6_2_G}. It is not difficult to see that all three structures are LSS of any of their 6-cycle subsets.
For example, the structure $\mathcal{S}$ in Figure~\ref{6_2_G}$(b)$ is an LSS of its 6-cycle subset  $C=\{v_1,v_3,v_6\}$
with the following nested sequence of ETSs: $C \subset \{v_1,v_3,v_6,v_5\} \subset \{v_1,v_3,v_6,v_5, v_4\} \subset \mathcal{S}$.
\end{proof}

One advantage of LSS property is that it corresponds to a simple algorithm for finding larger ETSs with LSS property
starting from one of their subsets in the nested sequence of ETSs. The basic step is explained in the following lemma.

\begin{lemma}
Consider an ETS ${\cal S} \in {\cal T}$ of size $a + 1$. Suppose that ${\cal S}$ has an elementary trapping subset ${\cal S}' \in {\cal T}$ of
size $a$. Then, the variable node $v \in {\cal S} \setminus {\cal S}'$ is only connected to unsatisfied check nodes
of ${\cal S}'$ (at least two of them), i.e., there is no connection between $v$ and the satisfied check nodes of ${\cal S}'$.
\label{lemma1}
\end{lemma}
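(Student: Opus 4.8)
The plan is to argue directly from the degrees of the check nodes adjacent to $v$, exploiting that $\mathcal{S}' = \mathcal{S} \setminus \{v\}$, so that $G(\mathcal{S})$ is obtained from $G(\mathcal{S}')$ simply by adjoining $v$ together with its $d_l$ incident edges. Thus for any check node $c$, its degree in $G(\mathcal{S})$ equals its degree in $G(\mathcal{S}')$ if $c$ is not adjacent to $v$, and exceeds it by exactly one (no parallel edges) if $c$ is adjacent to $v$.

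First I would show there is no edge between $v$ and a satisfied check node of $\mathcal{S}'$. Let $c$ be a check node adjacent to $v$ with $c \in \Gamma(\mathcal{S}')$, and write $d'(c)$ and $d(c)$ for its degrees in $G(\mathcal{S}')$ and $G(\mathcal{S})$. By the observation above $d(c) = d'(c)+1$; since $\mathcal{S}$ is elementary, $d(c) \le 2$, so $d'(c) \le 1$, while $c \in \Gamma(\mathcal{S}')$ forces $d'(c) \ge 1$. Hence $d'(c) = 1$, i.e. $c \in \Gamma_{\mathrm{o}}(\mathcal{S}')$. This rules out any edge from $v$ to $\Gamma_{\mathrm{e}}(\mathcal{S}')$; the remaining neighbors of $v$, those not in $\Gamma(\mathcal{S}')$, are merely new degree-one check nodes of $\mathcal{S}$ and do not affect the claim.

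Next I would establish the count of at least two. Because $\mathcal{S} \in \mathcal{T}$, the node $v$ is adjacent to at least two check nodes in $\Gamma_{\mathrm{e}}(\mathcal{S})$; let $c$ be one of them. Since $\mathcal{S}$ is elementary, the even number $d(c)$ must equal $2$: one of its two edges goes to $v$, and the other goes to some $u \in \mathcal{S}$ with $u \neq v$ (no parallel edges), hence $u \in \mathcal{S}'$. Therefore $c$ is adjacent to exactly one node of $\mathcal{S}'$, so $d'(c)=1$ and $c \in \Gamma_{\mathrm{o}}(\mathcal{S}')$. Thus each of the (at least two) neighbors of $v$ lying in $\Gamma_{\mathrm{e}}(\mathcal{S})$ is in fact an unsatisfied check node of $\mathcal{S}'$, which yields both the lower bound of two and is consistent with the first step. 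Combining the two steps gives the lemma. There is no genuine obstacle beyond careful bookkeeping; the main point to be attentive to is separating the neighbors of $v$ that lie in $\Gamma(\mathcal{S}')$ from those that do not, and invoking the defining property of $\mathcal{T}$ (not merely the ETS definition) to obtain the count of two.
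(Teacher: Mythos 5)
Your proof is correct and is essentially the argument the paper has in mind: the paper's own ``proof'' is just the one-line remark that the claim follows from the definitions of an ETS and of $\mathcal{T}$, and your two steps (degree-2 bound on elementary check nodes ruling out edges to $\Gamma_{\mathrm{e}}(\mathcal{S}')$, and the two-satisfied-neighbor requirement from $\mathcal{S}\in\mathcal{T}$ giving the lower bound) are exactly the intended bookkeeping. Nothing further is needed.
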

\begin{proof}
The proof is simple and is based on the definitions of an ETS and the set ${\cal T}$.
\end{proof}

The pseudo code of an algorithm corresponding to Lemma~\ref{lemma1} is given in Routine 1.

\linethickness{0.275mm}
\hspace{-.6cm} \line(1,0){500} \\
\textbf{Routine 1: Expansion of an ETS $\mathcal{S}$ of size $a$ to ETSs of size $a+1$ \\
Routine EX = OneExpansion$(\mathcal{S})$}\\
\linethickness{0.125mm}
\line(1,0){500}
\begin{algorithmic}[1]
\STATE  \textbf{Initialization:} \text{EX} $\leftarrow \emptyset$.
\STATE Let $\Gamma_{\mathrm{o}}(\mathcal{S})$ and $\Gamma_{\mathrm{e}}(\mathcal{S})$ be the set of unsatisfied check nodes and satisfied check nodes of $\mathcal{S}$, respectively.
\STATE  Let $\mathcal{O}_2(\mathcal{S})$ be the set of variable nodes which have at least two connections with  the check nodes in  $\Gamma_{\mathrm{o}}(\mathcal{S})$ and have no connection with  the check nodes in  $\Gamma_{\mathrm{e}}(\mathcal{S})$ .
\FOR {each element $v $ in $\mathcal{O}_2(\mathcal{S})$}
\STATE $\mathcal{S}'\leftarrow \mathcal{S} \cup v$.
\STATE $\text{EX} \leftarrow \text{EX} \cup \mathcal{S}'$.
\ENDFOR
\STATE  \textbf{Output:} EX 
\end{algorithmic}
\linethickness{0.275mm}
\line(1,0){500}

\begin{remark}
Based on Lemma~\ref{lemma1}, the set $\mathcal{O}_2(\mathcal{S})$ contains \textit{all} the nodes that can be part of the expansion of the ETS.
\end{remark}
\begin{remark}
Complexity of Routine 1 for expanding an $(a,b)$ ETS to ETSs of size $a+1$ in a $(d_l,d_r)$ regular graph, where $d_l$ and $d_r$ are left and right degrees, respectively,
is of order $O(b d_r)$. This comes from the fact that for finding the larger trapping sets, the algorithm needs to check at most
$b (d_r - 1)$ variable nodes as possible candidates. The memory requirement in this case is of order $O(a b d_r)$.
It should be however noted that, imposing the condition of Line 3 removes a large portion of neighboring variable nodes of $\mathcal{S}$ from the set
of possible candidates for expansion. This highly reduces the complexity of the algorithm.
\end{remark}

Consider the case where an ETS $\mathcal{S}$ of size $a$ is an LSS of an ETS $C$ of size $\alpha$.
Clearly, starting from $C$, the successive application of Routine 1 will result in finding all the ETSs which are layered supersets of
$C$. In particular, ETS $\mathcal{S}$ will be among the outputs after $a - \alpha$ applications of Routine 1.
Algorithm 1 contains the pseudo code of an algorithm that starts from a set of ETSs and finds {\em all} the ETSs
of size up to $k$ that are layered supersets of the initial set of ETSs.

\hspace{-.25in}
\linethickness{0.275mm}
\line(1,0){250} \\
\textbf{Algorithm 1}: Expansion of input ETSs to ETSs of size up to $k$ in $G=(L\cup R\,,E)\,$.\\
($\mathcal{L}_{\mathrm{in}}$ and $\mathcal{L}_{\mathrm{out}}$ are the lists of input and output trapping sets, respectively.)\\
\linethickness{0.125mm}
\line(1,0){250}
\begin{algorithmic}[1]
\STATE  \textbf{Inputs:} $G$ , ${\cal L}_{\mathrm{in}}$, $k$.
\STATE  \textbf{Initialization:} $\mathcal{L}^{i}_{\mathrm{out}}=\{\mathcal{S}\in {\cal L}_{\mathrm{in}}, |\mathcal{S}|=i$ \} for
$i = 2, \ldots, k$.

\FOR { each $i<k$ starting from the smallest one}
\FOR { each element $\mathcal{S} \in \mathcal{L}^{i}_{\mathrm{out}}$ }
\STATE EX= OneExpansion$(\mathcal{S})$
\STATE $\mathcal{L}^{i+1}_{\mathrm{out}} \leftarrow \mathcal{L}^{i+1}_{\mathrm{out}} \cup$ EX.
\ENDFOR
\ENDFOR
\STATE  \textbf{Output:} $\mathcal{L}_{\mathrm{out}}= \mathcal{L}^{2}_{\mathrm{out}} \cup \cdots \cup  \mathcal{L}^{k}_{\mathrm{out}}$.
\end{algorithmic}
\linethickness{0.275mm}
\line(1,0){250}

\section{Characterization of Elementary Trapping Sets of Left-Regular LDPC Codes}

\subsection{Motivating Examples}

It is easy to see that any trapping set in ${\cal T}$, including ETSs, contains at least one cycle.
It can thus be argued that cycles are the most basic structure for ETSs in ${\cal T}$. This motivates the study
of the relationship between cycles, as the most basic ETSs, and the more complex ETSs of ${\cal T}$.
Short cycles have long been known to be problematic for iterative decoding in general~\cite{Mao-Banihashemi-ICC-01}, and for the performance
in the error floor region, in particular~\cite{XB-09},~\cite{XBK-12},~\cite{KB-12}.
Short cycles are also easy to enumerate~\cite{XB-09}. In this section, we establish a simple relationship
between short cycles and the more complex ETSs. In particular, we prove that an overwhelming majority of dominant ETSs
of left-regular LDPC codes are layered supersets of short cycles. One important implication of this result
is that, starting from short cycles of the graph, Algorithm 1, presented in the previous section, can be used to find
all such ETSs in a {\em guaranteed} fashion.

The following examples demonstrate the relationship between short cycles and the more complex ETSs of left-regular graphs.

\begin{example}
Figure~\ref{51_3} shows a possible structure ${\cal S}$ of a $(5,1)$ ETS in a left-regular graph with $d_l = 3$ and $g = 6$.
The $(3,3)$ ETS $C=\{v_1, v_2, v_3\}$ forms a 6-cycle and is a subset of ${\cal S}$.
Set ${\cal S}$ is an LSS of $C$ with the following nested sequence of ETSs: $C \subset {\cal S}^{(1)} = \{v_1, v_2, v_3, v_4\} \subset {\cal S}$.
Starting from $C$, in the first round of expansion by Algorithm~1, variable node $v_4$ will be added and the $(4,2)$ ETS ${\cal S}^{(1)}$ will be found.
In the second round of expansion, with set ${\cal S}^{(1)}$ as the input, variable node $v_5$ will be added and $\mathcal{S}$ will be found.
\begin{figure}[!h]
\centering
\includegraphics[width=1.6in]{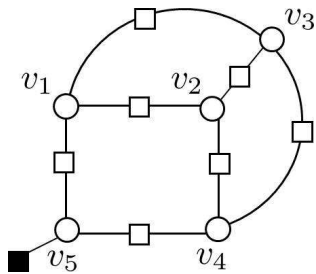}
\caption{A $(5,1)$ trapping set in a left-regular graph with $d_l = 3$.}
\label{51_3}
\end{figure}

Note that although $v_5$ is not part of any 6-cycle, this node will be added at the second round of expansion. In other words,
for a trapping set to be found by the algorithm, it is not necessary that all the nodes in the set participate in short(est) cycles.
\end{example}

Starting from a short cycle, each round of expansion by Algorithm~1, may result in several new ETSs.
These trapping sets all have the same size, but may have different number of unsatisfied check nodes,
and thus belong to different classes of trapping sets. This is demonstrated in the following example.

\begin{example}
Consider the structure in Figure~\ref{6_2_G}$(b)$ and one of its subsets $C=\{v_3,v_4,v_6\}$ which is a $(3,6)$ ETS.
Starting from $C$, as the input to Algorithm~1, there are two variable nodes $v_1$ and $v_5$ as candidates for
the first expansion. The resultant ETSs are $\{v_3,v_4,v_6,v_1\}$ and $\{v_3,v_4,v_6,v_5\}$, which are
$(4,6)$ and $(4,4)$ trapping sets, respectively.
\label{2b}
\end{example}

It is important to note that if an element of a class of ETSs is an LSS of a short cycle, this does not necessarily mean that all
the elements of that class are also LSSs of short cycles of the same length. The reason is that there
may be other non-isomorphic structures in that class which are not LSSs of any of the short cycles under consideration.

\begin{example}
Figure~\ref{66_4} shows two possible structures for a $(6,6)$ ETS in a left-regular graph with $d_l = 4$ and $g = 6$.
The structure of Figure~\ref{66_4}$(a)$ is an LSS of any of its 6-cycles and thus can be obtained by the recursive expansion of any of them through Algorithm~1.
The structure of Figure~\ref{66_4}$(b)$, on the other hand, is not an LSS of any of its 6-cycles, and is thus out of the reach of Algorithm~1,
if the algorithm starts from any 6-cycle.

\begin{figure}[!h]
\centering
\includegraphics[width=2.8in]{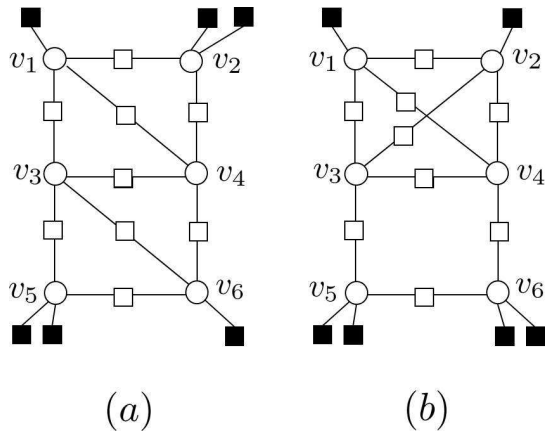}
\caption{Two possible structures of a $(6,6)$ ETS in a left-regular graph with $d_l = 4$ and $g = 6$.}
\label{66_4}
\end{figure}
\end{example}

It is however, easy to see that the structure in Figure~\ref{66_4}$(b)$ is an LSS of the set
$\{v_3, v_4, v_5,v_6\}$, which itself forms an 8-cycle. The trapping set can thus be found by Algorithm~1 if cycles of length 8 are included in the input set.
In general, adding short cycles longer than the girth to the input of Algorithm~1 can improve the coverage of this algorithm.
Nonetheless, there are structures which do not satisfy the LSS property with respect to any of their cycles. This means that even arbitrarily enlarging the
size of the cycles in the initial input set will not result in finding such structures by the algorithm.

\begin{example}
Figure~\ref{82_3} shows an $(8,2)$ ETS $\mathcal{S}$ in a left-regular graph with $d_l = 3$ and $g = 6$.
Set $\mathcal{S}$ contains cycles of length 6, 8 and 10. It is easy to check that $\mathcal{S}$ does not satisfy the LSS
property with respect to any of these cycles, and thus cannot be found using Algorithm~1 starting from any of them.
\begin{figure}[!h]
\centering
\includegraphics[width=2.8in]{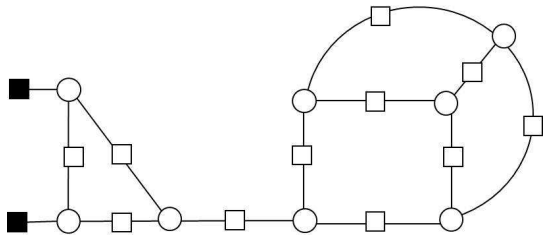}
\caption{An $(8,2)$ ETS in a left-regular graph with $d_l = 3$ and $g = 6$ which is not an LSS of any of its cycles.}
\label{82_3}
\end{figure}
\end{example}

\subsection{Non-Isomorphic Structures of Dominant ETSs and Their Characterization}

In this part of the paper, we investigate the structure of ETSs for left-regular LDPC codes with
left degrees 3, 4, 5 and 6. For each category of codes, we consider girth values 6 and 8, and
study all the non-isomorphic structures of different classes of $(a,b)$ ETSs with values of $a$ and $b$
up to 10. For each class of ETSs with given values of $a$ and $b$, we first find all the non-isomorphic
structures using the approach described in Section~III (with constraints on $d_l$ and $g$).
We then examine each of these structures to find out whether the structure is an LSS of any of its cycles.
This can be performed using Algorithm~1. Let ${\cal S}$ be the structure under consideration.
We can start with the set of shortest cycles, say of length $\ell$, in ${\cal S}$, and apply Algorithm~1 to recursively expand them to
larger subsets of ${\cal S}$. If this process ends with finding ${\cal S}$, then we report ${\cal S}$
as being an LSS of a cycle of length $\ell$. In the case that this process will not result in finding ${\cal S}$,
we use the set of cycles of next larger size as the input and repeat the process. This will continue until
${\cal S}$ is identified as an LSS of one of its cycles or until all the cycles are exhausted and ${\cal S}$ is not
an LSS of any of them. In the former case, if the cycle length is $x$, we refer to ${\cal S}$ as an LSS$_x$ structure.
The results for different values of $d_l$ and $g$ are reported
in the following subsections. For each value of $a$, we mostly consider the values of $b$ which satisfy $b/a\le 1$.
For a given value of $a$, these values of $b$ are believed to correspond to dominant trapping sets~\cite{Schlegel2010}. It is easy to see that for a Tanner graph with girth $g=8$, it is impossible to have any $(a,b)$ trapping set in $\mathcal{T}$ with $a<4$. For a Tanner graph with $g=6$ and variable node degree $d_l$, short cycles of length 6 are trivial $(3, 3(d_l-2))$ trapping sets. For this reason, we only consider $(a,b)$ trapping sets with $a\ge 4$.

\subsubsection{$d_l = 3, g = 6$}

\begin{theorem}
For left-regular graphs with $d_l = 3$ and $g = 6$, the multiplicity of non-isomorphic LSS$_x$ structures for different values
of $x$ are listed in Table ~\ref{d3g6} for different classes of ETSs in $\mathcal{T}$.
\label{thm1}
\end{theorem}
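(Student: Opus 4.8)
The plan is to prove Theorem~\ref{thm1} by an exhaustive enumeration carried out, for each admissible class $(a,b)$ (here $4 \le a \le 10$ and $b \le a$, the latter bound turning out to be automatic when $d_l=3$), in two stages: first produce, up to isomorphism, every structure of an $(a,b)$ ETS in $\mathcal{T}$ with $d_l = 3$ and $g = 6$; then decide, for each such structure $\mathcal{S}$, the smallest cycle length $x$ for which $\mathcal{S}$ is an LSS$_x$ structure, or that no such $x$ exists. Tallying the structures by the resulting value of $x$, with a separate tally for those that are not layered supersets of any of their cycles, yields the entries of Table~\ref{d3g6}.

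For the first stage I would work with normal graphs, using the one-to-one correspondence between an ETS of a left-regular graph and its normal graph established in Section~III. When $d_l = 3$, each variable node of the induced subgraph has three edges; membership in $\mathcal{T}$ forces it to meet at least two satisfied (hence degree-$2$) check nodes, so it meets exactly zero or one degree-$1$ check node and thus has normal-graph degree $3$ or $2$. A degree-$1$ check node lies on exactly one variable node, so counting such check nodes shows that exactly $b$ variable nodes have normal-degree $2$ and $a - b$ have normal-degree $3$; in particular $b \le a$, so the regime $b/a \le 1$ is automatic for $d_l = 3$. Since $G$ has no parallel edges and $g = 6$, the normal graph is a simple graph (a loop or multi-edge would be a parallel edge or a $4$-cycle in $G$), and it is connected if and only if the induced subgraph is (the removed degree-$1$ checks are leaves, and no variable node is isolated in the normal graph since its degree there is at least $2$). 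Hence the $(a,b)$ ETSs in $\mathcal{T}$ with $d_l = 3$, $g = 6$ correspond exactly, up to isomorphism, to the connected simple graphs on $a$ vertices whose degree sequence consists of $b$ twos and $a - b$ threes. These are generated with the \emph{nauty} package, exactly as in the $(6,6)$ example of Section~III, and by the correspondence the resulting list is complete and irredundant.

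For the second stage, given such a structure $\mathcal{S}$ I would take the working graph $G$ to be the induced subgraph of $\mathcal{S}$, enumerate all cycles of $\mathcal{S}$ (which is inexpensive, cf.~\cite{XB-09}), and for $\ell = 6, 8, \ldots$ in increasing order run Algorithm~1 with $\mathcal{L}_{\mathrm{in}}$ equal to the set of length-$\ell$ cycles of $\mathcal{S}$ and $k = a$, testing whether $\mathcal{S}$ appears among the outputs. Correctness of this test rests on Lemma~\ref{lemma1} and the remark following Routine~1: a one-node extension of an ETS in $\mathcal{T}$ can only use a node of $\mathcal{O}_2(\cdot)$, so the outputs of Algorithm~1 are precisely the ETSs in $\mathcal{T}$ that are layered supersets of the input ETSs (and since all sets in play are subsets of $\mathcal{S}$, membership in $\mathcal{T}$ is the same whether computed in $G$ or in the full Tanner graph, because the induced subgraph of each such set is unchanged). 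Therefore $\mathcal{S}$ is reached from the length-$\ell$ cycles if and only if it is an LSS of one of them, so the least such $\ell$ is the value $x$; if the (finitely many) cycle lengths of $\mathcal{S}$ are exhausted without reaching $\mathcal{S}$, then $\mathcal{S}$ is an LSS of none of its cycles.

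The individual steps — the \emph{nauty} run, the reconstruction of induced subgraphs from normal graphs, and the repeated calls to Routine~1 — are mechanical. The point that requires care, and the main obstacle, is the completeness of the first stage: one must verify that the normal-graph degree and simplicity conditions stated above are exactly equivalent to ``$(a,b)$ ETS in $\mathcal{T}$ with $d_l = 3$ and $g = 6$,'' so that no structure is missed and none is spuriously produced, making the \emph{nauty} enumeration provably exhaustive. (The same scheme produces the tables for $g = 8$ and for $d_l = 4,5,6$ in the later sections, with the degree sequence changed accordingly and, for $g = 8$, the additional requirement that the normal graph be triangle-free, since a $3$-cycle in the normal graph is a $6$-cycle in $G$.)
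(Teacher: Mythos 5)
Your proposal is correct and follows essentially the same route as the paper: the authors likewise prove the table by generating all non-isomorphic structures via the normal-graph representation with \emph{nauty} and then testing each structure for the LSS property by running Algorithm~1 from its cycles in increasing order of length (they merely add an illustrative hand-derivation for the $(6,0)$ class and an explicit parity/$b\le a$ argument for the non-existent ``-'' entries, both of which your degree-sequence setup subsumes). Your explicit justification of the exactness of the normal-graph correspondence and of why Algorithm~1's outputs are precisely the layered supersets is a welcome tightening of what the paper leaves implicit.
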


\begin{table}[!h]
\caption{\small{LSS properties of non-isomorphic structures of $(a,b)$ ETS Classes for left-regular graphs with $d_l = 3$ and $g = 6$.  }}
\centering
\begin{adjustwidth}{-1cm}{}
\renewcommand{\tabcolsep}{2pt}
\renewcommand{\arraystretch}{1.05}
\begin{tabular}{||c"c|c|c|c|c|c|c|}
\hline
\cline{1-7}
&\small{\textbf{$a=4$}}&\small{\textbf{$a=5$}}&\small{\textbf{$a=6$}}&\small{\textbf{$a=7$}}&\small{\textbf{$a=8$}}&\small{\textbf{$a=9$}} \\\hline
\cline{1-7}
\small{\textbf{$b=0$}}&\small{$\begin{Bmatrix}  g\\1  \end{Bmatrix}$}&\small{-}&\small{$\begin{Bmatrix}  g+2\\2  \end{Bmatrix}$}&\small{-}&\small{$\begin{Bmatrix}  g+4&g+6\\3&2  \end{Bmatrix}$}&\small{-}  \\\hline

\small{\textbf{$b=1$}}&\small{-}&\small{$\begin{Bmatrix}  g\\1  \end{Bmatrix}$}&\small{-}&\small{$\begin{Bmatrix}  g+2&g+4\\3&1 \end{Bmatrix}$}&\small{-}& \small{$\begin{Bmatrix}  g+4&g+6&g+8&\text{NA}\\9&7&2&1  \end{Bmatrix}$}   \\\hline

\small{\textbf{$b=2$}}&\small{$\begin{Bmatrix}  g\\1  \end{Bmatrix}$}&\small{-}&\small{$\begin{Bmatrix}  g+2&g+4\\3&1  \end{Bmatrix}$}&\small{-}&\small{$\begin{Bmatrix}  g+4&g+6&g+8&\text{NA}\\9&7&1&2  \end{Bmatrix}$}&\small{-}   \\\hline

\small{\textbf{$b=3$}}&\small{-}&\small{$\begin{Bmatrix}  g+2\\2  \end{Bmatrix}$}&\small{-}&\small{$\begin{Bmatrix}  g+4&g+6&\text{NA}\\6&3&1  \end{Bmatrix}$}&\small{-}& \small{$\begin{Bmatrix}  g+6&g+8&g+10&\text{NA}\\31&18&14&10  \end{Bmatrix}$}  \\\hline

\small{\textbf{$b=4$}}&\small{$\begin{Bmatrix}  g+2\\1  \end{Bmatrix}$}&\small{-}&\small{$\begin{Bmatrix}  g+4&g+6&\text{NA}\\2&1&1  \end{Bmatrix}$}&\small{-}&\small{$\begin{Bmatrix}  g+6&g+8&g+10&\text{NA}\\12&6&2&5  \end{Bmatrix}$}&\small{-}   \\\hline

\small{\textbf{$b=5$}}&\small{-}&\small{$\begin{Bmatrix}  g+4\\1  \end{Bmatrix}$}&\small{-}&\small{$\begin{Bmatrix}  g+6&g+8&\text{NA}\\3&1&2 \end{Bmatrix}$}&\small{-}& \small{$\begin{Bmatrix}  g+8&g+10&g+12&\text{NA}\\19&13&3&17  \end{Bmatrix}$}  \\\hline

\small{\textbf{$b=6$}}&\small{-}&\small{-}&\small{$\begin{Bmatrix}  g+6\\1  \end{Bmatrix}$}&\small{-}&\small{$\begin{Bmatrix}  g+8&\text{NA}\\3&7  \end{Bmatrix}$}& \small{-}  \\\hline

\small{\textbf{$b=7$}}&\small{-}&\small{-}&\small{-}&\small{$\begin{Bmatrix}  g+8\\1  \end{Bmatrix}$}&\small{-}& \small{$\begin{Bmatrix}  g+10&g+12&\text{NA}\\4&2&7  \end{Bmatrix}$}  \\\hline

\small{\textbf{$b=8$}}&\small{-}&\small{-}&\small{  -  }&\small{-}&\small{$\begin{Bmatrix}  \text{$g+10$}\\1  \end{Bmatrix}$}& \small{-}  \\\hline

\cline{1-7}
\end{tabular}
\end{adjustwidth}
\label{d3g6}
\end{table}

(Each row in Table~\ref{d3g6} corresponds to a specific size $a$ of an ETS, and each column corresponds to a specific
number of unsatisfied check nodes, $b$. For each pair $(a,b)$, Table~\ref{d3g6} lists the multiplicity of LSS$_x$ structures
for different values of $x$. For example, having ${\tiny \begin{Bmatrix}  g+2&g+4\\3&1  \end{Bmatrix}}$ for a specific
class of ETSs means that there are 4 different non-isomorphic structures for that class of trapping sets:
Three of them are LSS$_{g+2}$ and one of them is an LSS$_{g+4}$ structure. Starting with cycles of length $g+2$ and $g+4$,
Algorithm~1 is thus guaranteed to find all such ETSs.  Having the symbol ``-" for a class of $(a,b)$ trapping sets means
that for the underlying conditions, i.e., $d_l = 3$ and $g = 6$, it is impossible to have such a class of trapping sets.
For the structures which do not satisfy the LSS property with respect to any of their cycles, we use the notation ``NA" (stands for not applicable).
Starting from any set of cycles of the graph, Algorithm~1 cannot find such structures.)

\begin{proof}
In general, the results reported in Table~\ref{d3g6} for any $(a,b)$ ETS can be proved by first obtaining
all the non-isomorphic structures using the nauty program, as described in Section~III, and then examining
each such structure for the LSS property, using Algorithm~1, as described earlier in this subsection. We however, provide
a formal proof for the results pertaining to the ETS class $(6,0)$ as well as all the classes that cannot exist
in left-regular graphs with $d_l = 3$ and $g = 6$ (i.e., those with designation ``-" in the table). The general approach
for the formal proof of the rest of the results is similar and not provided.

\underline{$(6,0)$ ETSs}: We first prove that the two structures presented in Figure~\ref{6_0_NG} are the only possible non-isomorphic
structures for $(6, 0)$ ETSs in left-regular graphs with $d_l = 3$ and $g = 6$. We use the normal graph representation of these structures,
also shown in  Figure~\ref{6_0_NG}, for the proof.

\begin{figure}[!h]
\centering
\includegraphics[width=3.7in]{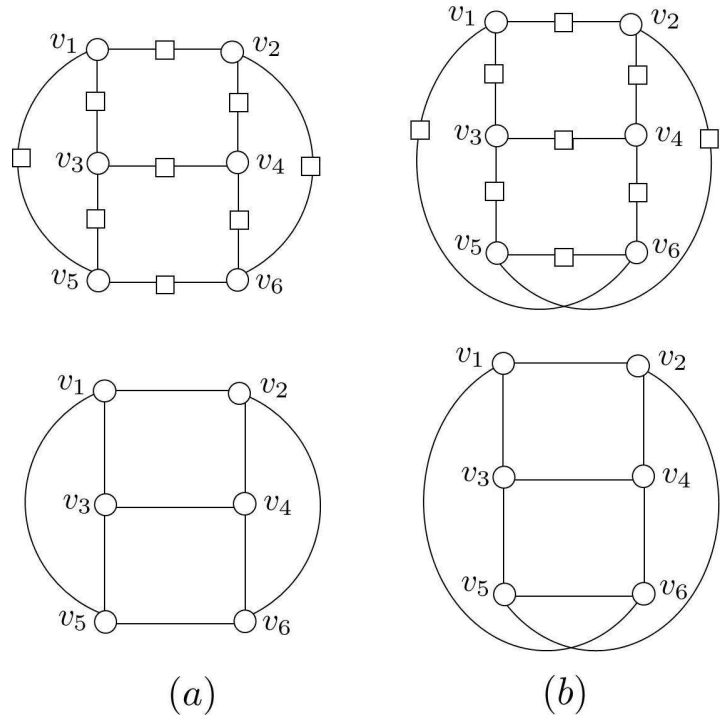}
\caption{The only possible non-isomorphic structures for $(6, 0)$ ETSs in left-regular graphs with $d_l = 3$ and $g = 6$, and their normal graph representations.}
\label{6_0_NG}
\end{figure}

Since the number of unsatisfied check nodes is zero, every node in the normal graph must be connected to three other nodes.
Starting from an arbitrary node, say $v_3$, as the root, we grow the normal graph, and will have three nodes in the first layer.
We arbitrarily denote these nodes by $v_1$, $v_4$, and $v_5$. Consider the case where the nodes in the first layer
do not have any edges in common (see Figure~\ref{ab_grow1}$(a)$). In this case, they must have 6 edges connected to the two remaining nodes
($v_2$ and $v_6$). According to the girth constraint (i.e., $g=6$), each pair of nodes can have at most one edge in common.
This implies that the only possible scenario is the case where each of the three nodes in the first layer is connected
to both $v_2$ and $v_6$. This results in the structure of Figure~\ref{6_0_NG}$(b)$. Note that in this case,
the length of the shortest cycles in the structure is 8.

Now, consider the case where the nodes in the first layer have some edge(s) in common. Based on the girth constraint ($g = 6$),
the two remaining nodes ($v_2$ and $v_6$) must have  at least 4 edges in common with the nodes in the first layer. (Otherwise,
for $v_2$ and $v_6$ to have degree 3, they need to have more than one edge in common, which contradicts the girth constraint.)
This implies that the nodes in the first layer can have only one edge in common. Without loss of generality,
we assume $v_1$ and $v_4$ are connected (see Figure~\ref{ab_grow1}$(b)$).
From this point on, there is no choice in connecting the nodes. Node $v_5$ must be connected to both $v_2$ and $v_6$,
and each of the nodes $v_1$ and $v_4$ must have one connection to $v_2$ or $v_6$. (Switching the connections will result in
isomorphic structures.) This results in the structure of Figure~\ref{6_0_NG}$(a)$.

\begin{figure}[!h]
\centering
\includegraphics[width=3in]{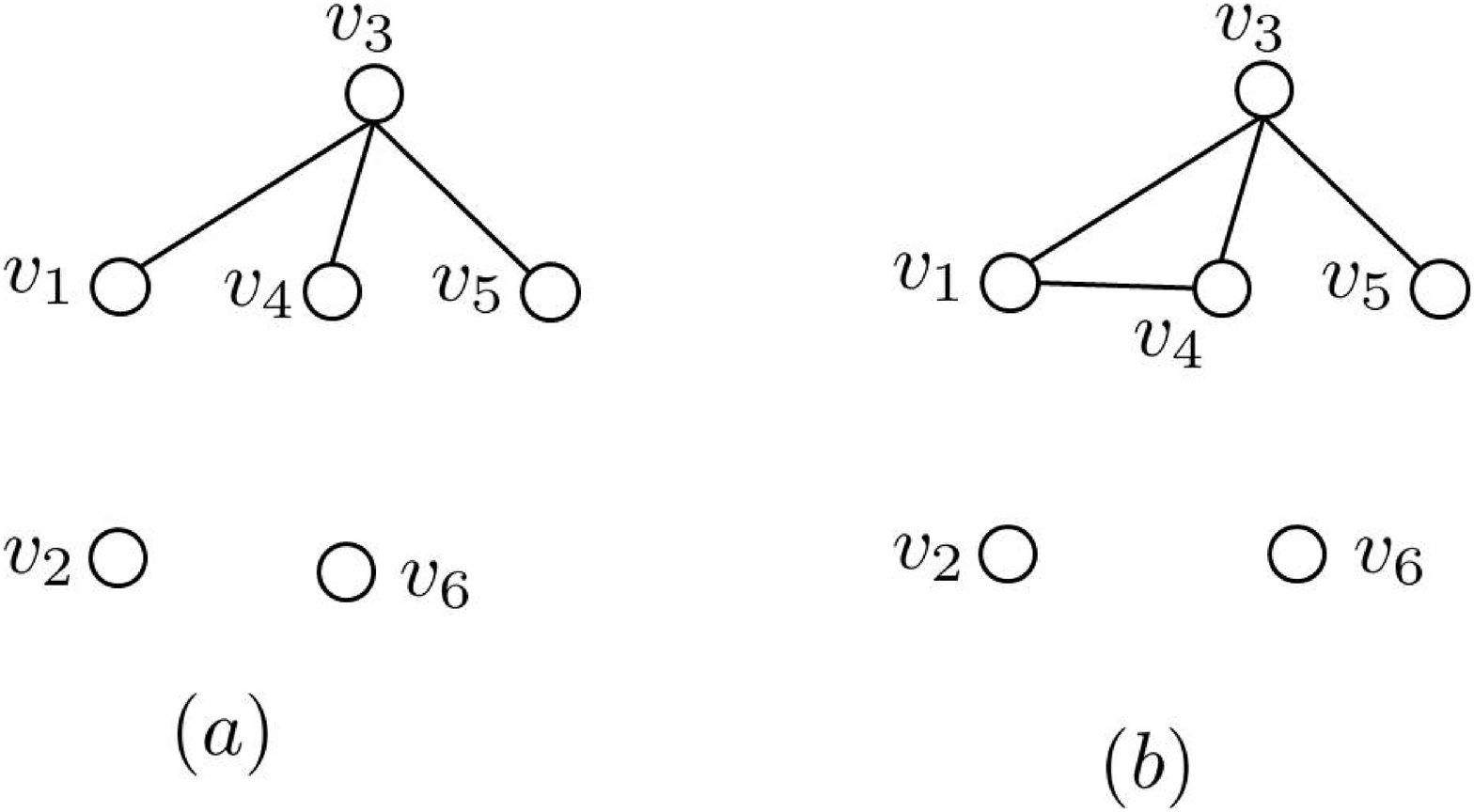}
\caption{Growing the normal graph of $(6,0)$ ETSs of a left-regular graph with $d_l = 3$ and $g = 6$.}
\label{ab_grow1}
\end{figure}

There are two cycles of length 6 in the structure $\mathcal{S}_1$ of Figure~\ref{6_0_NG}$(a)$. However, $\mathcal{S}_1$ is not an LSS of any of them.
There are also three cycles of length 8 in $\mathcal{S}_1$, and $\mathcal{S}_1$ is an LSS of all of them. For example, $\mathcal{S}_1$
is an LSS of $C_1=\{v_1,v_3,v_4, v_2\}$ with the following nested sequence of ETSs: $C_1 \subset \{v_1,v_3,v_4, v_2, v_5\} \subset \mathcal{S}_1$.
The structure $\mathcal{S}_2$ of Figure ~\ref{6_0_NG}$(b)$ does not have any 6-cycles. It has four 8-cycles and has LSS property with respect to all of them.

\underline{Non-existent structures}: All such cases in Table~\ref{d3g6} can be proved using two facts: (i) An $(a,b)$ ETS in a left-regular graph with $d_l = 3$
has $3a$ edges and $(3a-b)/2$ satisfied check nodes. For the latter to be integer, there cannot be any $(a,b)$ ETS with an odd $a$ and an even $b$,
or vice versa. (ii) By definition, each variable node of an ETS ${\cal S} \in {\cal T}$ of a left-regular graph with $d_l = 3$
can be connected to at most one unsatisfied check node. This means that it is not possible to have an $(a,b)$ ETS with $b>a$.
\end{proof}

The results of Table~\ref{d3g6} indicate that the majority of ETS structures satisfy the LSS property with respect to
short cycles. This is particularly the case for smaller values of $a$ and $b$ which correspond to more problematic trapping sets.

\begin{example}
A cycle of length $2a$ in a left-regular graph with $d_l = 3$ is an $(a,a)$ ETS. For example, Figure~\ref{55_3_n} shows the structure
of a $(5,5)$ ETS in a left-regular graph with $d_l = 3$, which is a cycle of length 10. In fact, as the results of Table~\ref{d3g6} show
cycles of length $2a$ are the only $(a,a)$ ETSs for the graphs under consideration.
Such cycles trivially satisfy the LSS property.

\begin{figure}[!h]
\centering
\includegraphics[width=1.5in]{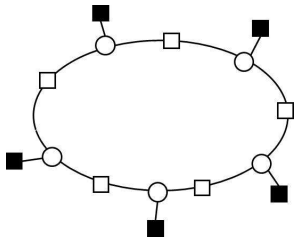}
\caption{A cycle of length 10 is the only possible structure for a $(5,5)$ ETS in a left-regular graph with $d_l = 3$ and $g = 6$.}
\label{55_3_n}
\end{figure}

\end{example}

\begin{example}
In Figure~\ref{N64_73_91_3_1}, the three structures corresponding to the ``NA'' cases in Table~\ref{d3g6} for $(6,4)$, $(7,3)$ and $(9,1)$ ETSs
are presented. All three structures consist of two ETSs connected by a check node.
\begin{figure}[!h]
\centering
\includegraphics[width=4.5in]{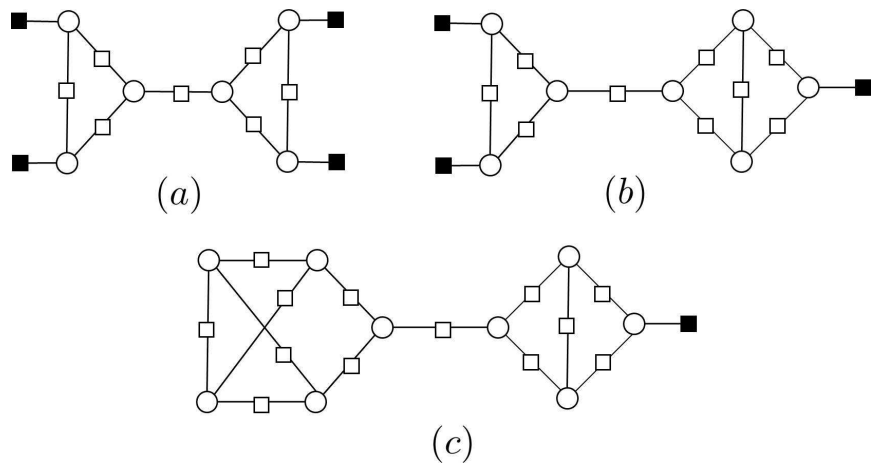}
\caption{The three structures corresponding to the ``NA'' cases in Table~\ref{d3g6} for $(6,4)$, $(7,3)$ and $(9,1)$ ETSs.}
\label{N64_73_91_3_1}
\end{figure}

\end{example}

In the rest of the paper, we omit the proofs as they are similar in nature to that of Theorem~\ref{thm1}.

\subsubsection{$d_l = 3, g = 8$}

\begin{theorem}
For left-regular graphs with $d_l = 3$ and $g = 8$, the multiplicity of non-isomorphic LSS$_x$ structures for different values
of $x$ are listed in Table ~\ref{d3g8} for different classes of ETSs in $\mathcal{T}$.
\end{theorem}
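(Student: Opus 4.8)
The plan is to follow exactly the recipe already laid out for Theorem~\ref{thm1}, only with the girth constraint changed from $g=6$ to $g=8$. First I would enumerate, for each admissible pair $(a,b)$ with $a,b\le 10$ (restricting to $a\ge 4$, since the excerpt already notes no set in $\mathcal{T}$ can have $a<4$ when $g=8$), all non-isomorphic normal graphs on $a$ nodes with the appropriate number of edges $(3a-b)/2$, minimum degree $\ge 2$ (from the $\mathcal{T}$ condition that every variable node meets at least two satisfied checks, which for $d_l=3$ forces normal-graph degree $\ge 2$), maximum degree $\le 3$, and—crucially—girth of the corresponding Tanner graph at least $8$. Translating the Tanner-graph girth-8 condition into a condition on the normal graph is the one genuinely new ingredient compared with the $g=6$ case: a $2\ell$-cycle in the Tanner graph corresponds to an $\ell$-cycle in the normal graph, so $g\ge 8$ means the normal graph has no $2$-cycles (already excluded as parallel edges) and no $3$-cycles, i.e. the normal graph must be triangle-free. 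I would feed these constraints to the \emph{nauty} program as in Section~III, keep only the outputs realizing the exact degree sequence, and thereby obtain the complete list of non-isomorphic structures for each class.

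Next, for each such structure $\mathcal{S}$ I would run Algorithm~1 to test the LSS property: start from the set of shortest cycles contained in $\mathcal{S}$ (length $g+$ something, now at least $8$), recursively apply Routine~1, and check whether $\mathcal{S}$ is produced; if not, move to the next-larger cycle length, and so on, declaring $\mathcal{S}$ an LSS$_x$ structure for the smallest $x$ that works, or ``NA'' if no cycle length succeeds. Tallying these outcomes across all non-isomorphic structures in a class yields precisely the multiplicity data displayed in Table~\ref{d3g8}, which is all the theorem asserts. As with Theorem~\ref{thm1}, the non-existence entries (``-'') are handled by the same two elementary parity/degree facts: an $(a,b)$ ETS in a $d_l=3$ left-regular graph has $3a$ edges and $(3a-b)/2$ satisfied checks, so $a$ and $b$ must have the same parity; and each variable node touches at most one unsatisfied check, so $b\le a$. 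A handful of further small-$a$ classes are excluded by the girth-$8$ constraint itself (e.g. the normal graph cannot contain a triangle), which rules out some structures that do exist when $g=6$.

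The main obstacle is not conceptual but combinatorial bookkeeping: the girth-$8$ constraint tends to force the normal graphs to be sparser and more ``tree-like plus a few long cycles,'' which actually reduces the \emph{nauty} search space relative to the $g=6$ case, but it also means the interesting structures sit at larger cycle lengths, so Algorithm~1 must be iterated over several candidate cycle lengths for each structure, and one must be careful to certify—rather than merely fail to find—that a given $\mathcal{S}$ is \emph{not} an LSS of any of its cycles before labelling it ``NA.'' Since the LSS test via Algorithm~1 is exhaustive over the (finitely many) cycles of $\mathcal{S}$, this certification is rigorous; the only real risk is clerical error in transcribing the multiplicities into the table. Following the statement of Theorem~\ref{thm1}, we omit the detailed proof, as it is identical in method to that of Theorem~\ref{thm1} with $g=8$ in place of $g=6$.
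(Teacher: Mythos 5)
Your proposal is correct and follows essentially the same route as the paper, which explicitly omits this proof as being identical in method to that of Theorem~\ref{thm1}: enumerate the non-isomorphic normal graphs under the degree and girth constraints (your translation of Tanner-graph girth $8$ into triangle-freeness of the normal graph is the right adjustment), then test each structure for the LSS property via Algorithm~1, with the ``-'' entries following from the parity, $b\le a$, and girth restrictions.
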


\begin{table}[!h]
\caption{\small{LSS properties of non-isomorphic structures of $(a,b)$ ETS Classes for left-regular graphs with $d_l = 3$ and $g = 8$.   }}
\centering
\begin{tabular}{||c"c|c|c|c|c|c|c|}
\hline
\cline{1-7}
&\small{\textbf{$a=4$}}&\small{\textbf{$a=5$}}&\small{\textbf{$a=6$}}&\small{\textbf{$a=7$}}&\small{\textbf{$a=8$}}&\small{\textbf{$a=9$}} \\\hline
\cline{1-7}
\small{\textbf{$b=0$}}&\small{-}&\small{-}&\small{$\begin{Bmatrix}  g\\1  \end{Bmatrix}$}&\small{-}&\small{$\begin{Bmatrix}  g+2&g+4\\1&1  \end{Bmatrix}$}&\small{-}  \\\hline

\small{\textbf{$b=1$}}&\small{-}&\small{-}&\small{-}&\small{$\begin{Bmatrix}  g\\1  \end{Bmatrix}$}&\small{-}& \small{$\begin{Bmatrix}  g+2&g+4\\3&1  \end{Bmatrix}$}   \\\hline

\small{\textbf{$b=2$}}&\small{-}&\small{-}&\small{$\begin{Bmatrix}  g\\1  \end{Bmatrix}$}&\small{-}&\small{$\begin{Bmatrix}  g+2&g+4 \\3&2 \end{Bmatrix}$}&\small{-}   \\\hline

\small{\textbf{$b=3$}}&\small{-}&\small{$\begin{Bmatrix}  g\\1  \end{Bmatrix}$}&\small{-}&\small{$\begin{Bmatrix}  g+2&g+4\\2&1  \end{Bmatrix}$}&\small{-}& \small{$\begin{Bmatrix}  g+4&g+6\\13&4  \end{Bmatrix}$}  \\\hline

\small{\textbf{$b=4$}}&\small{$\begin{Bmatrix}  g\\1  \end{Bmatrix}$}&\small{-}&\small{$\begin{Bmatrix}  g+2&g+4\\1&1  \end{Bmatrix}$}&\small{-}&\small{$\begin{Bmatrix}  g+4&g+6&g+8\\6&2&2  \end{Bmatrix}$}&\small{-}   \\\hline

\small{\textbf{$b=5$}}&\small{-}&\small{$\begin{Bmatrix}  g+2\\1  \end{Bmatrix}$}&\small{-}&\small{$\begin{Bmatrix}  g+4&g+6\\2&1  \end{Bmatrix}$}&\small{-}& \small{$\begin{Bmatrix}  g+6&g+8&g+10&\text{NA}\\10&7&3&1  \end{Bmatrix}$}  \\\hline

\small{\textbf{$b=6$}}&\small{-}&\small{-}&\small{$\begin{Bmatrix}  g+4\\1  \end{Bmatrix}$}&\small{-}&\small{$\begin{Bmatrix}  g+6&g+8&\text{NA}\\2&2&2  \end{Bmatrix}$}& \small{-}  \\\hline

\small{\textbf{$b=7$}}&\small{-}&\small{-}&\small{-}&\small{$\begin{Bmatrix}  g+6\\1  \end{Bmatrix}$}&\small{-}& \small{$\begin{Bmatrix} g+8&g+10& \text{NA}\\3&2&3  \end{Bmatrix}$}  \\\hline

\small{\textbf{$b=8$}}&\small{-}&\small{-}&\small{-}&\small{-}&\small{$\begin{Bmatrix}  g+8\\1  \end{Bmatrix}$}& \small{-} \\\hline

\cline{1-7}
\end{tabular}
\label{d3g8}
\end{table}

The results of Table~\ref{d3g8} also indicate that the majority of ETS structures satisfy the LSS property with respect to short cycles of the graph.
Compared to the results of Table~\ref{d3g6}, a larger portion of ETS classes have the property that all their non-isomorphic structures
are LSSs of short cycles.

\begin{remark}
In a left-regular graph with $d_l =3$, any ETS in $\mathcal{T}$ is also an absorbing set. So the results presented in Tables~\ref{d3g6} and~\ref{d3g8}
are also applicable to the absorbing sets of these graphs.
\end{remark}

\subsubsection{$d_l = 4$ and $g = 6$}

\begin{theorem}
For left-regular graphs with $d_l = 4$ and $g = 6$, the multiplicity of non-isomorphic LSS$_x$ structures for different values
of $x$ are listed in Tables~\ref{d4g6_1} and~\ref{d4g6_2} for different classes of ETSs in $\mathcal{T}$.
\end{theorem}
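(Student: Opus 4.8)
The plan is to follow verbatim the two-stage procedure established in the proof of Theorem~\ref{thm1}. For each ETS class $(a,b)$ listed in Tables~\ref{d4g6_1} and~\ref{d4g6_2}, the first stage enumerates all non-isomorphic normal-graph structures of $(a,b)$ ETSs under the constraints $d_l = 4$ and $g = 6$. By the one-to-one correspondence between an ETS of a left-regular graph and its normal graph (Section~III), this is equivalent to listing all connected simple graphs (bipartite or not) on $a$ vertices with exactly $(4a-b)/2$ edges in which every vertex has degree between $2$ and $4$: degree at least $2$ because the set lies in $\mathcal{T}$, so every variable node is adjacent to at least two degree-$2$ check nodes; and degree at most $4 = d_l$ by left-regularity. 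Since $\sum_v d(v) = 4a - b$, the count of degree-one check nodes, $\sum_v (4 - d(v)) = b$, is then automatically correct. The condition $g = 6$ translates to girth at least $3$ in the normal graph, which is vacuous for a simple graph, so no extra restriction is imposed. Running the \emph{nauty} program with the above parameters returns exactly the list of valid normal graphs for the class.

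The second stage decides, for each enumerated structure $\mathcal{S}$, whether it has the LSS property and, if so, with respect to cycles of which length. We take the set of shortest cycles of $\mathcal{S}$, of length $\ell$ say, as the input list to Algorithm~1 run on the subgraph $G(\mathcal{S})$, set the size bound to $a$, and check whether $\mathcal{S}$ itself appears among the outputs. If it does, $\mathcal{S}$ is an LSS$_\ell$ structure; otherwise we repeat with the cycles of length $\ell+2$, then $\ell+4$, and so on, stopping as soon as $\mathcal{S}$ is recovered (recording the corresponding value of $x$) or once all cycle lengths occurring in $\mathcal{S}$ have been exhausted, in which case $\mathcal{S}$ is declared NA. Collecting the outcomes over all structures of the class yields the multiplicity vector reported in the table.

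The ``$-$'' entries, i.e., the classes that cannot occur, are settled by a short direct argument paralleling the $d_l = 3$ case. An $(a,b)$ ETS in a left-regular graph with $d_l = 4$ has $4a$ edges incident to its variable nodes, hence $(4a-b)/2$ satisfied (degree-$2$) check nodes, which forces $b$ to be even. Moreover, every variable node of a set in $\mathcal{T}$ is connected to at least two satisfied check nodes and therefore to at most two unsatisfied ones, so $b \le 2a$. Finally, as noted at the start of this subsection, for $g = 6$ the only set in $\mathcal{T}$ with $a = 3$ is the $6$-cycle, a trivial $(3,6)$ ETS, so we restrict to $a \ge 4$. These three observations account for every ``$-$'' in the tables.

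The main obstacle is one of scale rather than of mathematical depth: the number of non-isomorphic normal graphs of a class grows rapidly with $a$ and $b$ — even the modest $(6,6)$ class already has $11$ structures — and for the larger classes one must, in addition, search over several cycle lengths for each structure, so the verification genuinely relies on \emph{nauty} together with Algorithm~1 rather than on hand computation. It is precisely the normal-graph formulation that keeps this tractable: because \emph{nauty} accepts the edge count and the degree bounds directly, every graph it returns is already a valid normal graph of the class, and the enumeration requires no post-filtering of spurious structures.
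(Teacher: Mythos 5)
Your two-stage procedure (enumerate normal graphs with \emph{nauty}, then test each structure for the LSS property by running Algorithm~1 from its cycles in increasing order of length) is exactly the method the paper uses: the paper explicitly omits the proof of this theorem, stating it is ``similar in nature to that of Theorem~\ref{thm1},'' and your translation of the normal-graph constraints for $d_l=4$ (namely $(4a-b)/2$ edges, degrees between $2$ and $4$, connectedness, and simplicity being the only consequence of $g=6$) is correct.

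There is, however, one concrete flaw in your treatment of the non-existent classes. Your three observations --- $b$ even, $b\le 2a$, and $a\ge 4$ --- do \emph{not} account for every ``$-$'' entry in Table~\ref{d4g6_1}: the classes $(4,0)$ and $(4,2)$ have $b$ even, $b\le 2a$, and $a=4$, yet they are marked impossible. The missing ingredient is a degree-versus-size bound: since $g\ge 6$ forces the normal graph to be simple, a variable node of normal degree $4$ needs four distinct neighbours, so it can only exist when $a\ge 5$. For $a=4$ every node therefore has normal degree at most $3$, whence $b=\sum_v\bigl(4-d(v)\bigr)\ge 4$, ruling out $(4,0)$ and $(4,2)$. (This is the $d_l=4$ analogue of the fact, implicit in the paper's Proposition~\ref{prop1} discussion, that a node connected to $k$ unsatisfied checks has exactly $4-k$ neighbours in the normal graph.) With this additional observation your argument for the ``$-$'' entries is complete; the rest of the proposal stands as written.
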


\begin{table}[!h]
\caption{\small{LSS properties of non-isomorphic structures of $(a,b)$ ETS Classes for left-regular graphs with $d_l = 4$ and $g = 6$.}}
\centering
\begin{tabular}{||c"c|c|c|c|c|c|c|}
\hline
\cline{1-5}
&\small{\textbf{$a=4$}}&\small{\textbf{$a=5$}}&\small{\textbf{$a=6$}}&\small{\textbf{$a=7$}} \\\hline
\cline{1-5}
\small{\textbf{$b=0$}}&\small{-}&\small{TS:$\begin{Bmatrix}  g\\1  \end{Bmatrix}$}&\small{TS:$\begin{Bmatrix}  g \\1 \end{Bmatrix}$}&\small{TS:$\begin{Bmatrix}  g\\2  \end{Bmatrix}$}  \\
&\small{-}&\small{AS:$\begin{Bmatrix}  g\\1  \end{Bmatrix}$}&\small{AS:$\begin{Bmatrix}  g\\1  \end{Bmatrix}$}&\small{AS:$\begin{Bmatrix}  g\\2  \end{Bmatrix}$}  \\\hline

\small{\textbf{$b=1$}}&\small{-}&\small{-}&\small{-}&\small{-}  \\\hline
\small{\textbf{$b=2$}}&\small{-}&\small{TS:$\begin{Bmatrix}  g\\1  \end{Bmatrix}$}&\small{TS:$\begin{Bmatrix}  g\\3  \end{Bmatrix}$}&\small{TS:$\begin{Bmatrix}  g\\9  \end{Bmatrix}$}   \\
&\small{-}&\small{AS:$\begin{Bmatrix}  g\\1  \end{Bmatrix}$}&\small{AS:$\begin{Bmatrix}  g \\2 \end{Bmatrix}$}&\small{AS:$\begin{Bmatrix}  g\\7  \end{Bmatrix}$}   \\\hline

\small{\textbf{$b=3$}}&\small{-}&\small{-}&\small{-}&\small{-}  \\\hline
\small{\textbf{$b=4$}}&\small{TS:$\begin{Bmatrix}  g\\1 \end{Bmatrix}$}&\small{TS:$\begin{Bmatrix}  g\\2  \end{Bmatrix}$}&\small{TS:$\begin{Bmatrix}  g \\7 \end{Bmatrix}$}&\small{TS:$\begin{Bmatrix}  g&g+2&g+4\\25&2&1 \end{Bmatrix}$}  \\
&\small{AS:$\begin{Bmatrix}  g\\1 \end{Bmatrix}$}&\small{AS:$\begin{Bmatrix}  g \\1 \end{Bmatrix}$}&\small{AS:$\begin{Bmatrix}  g\\3  \end{Bmatrix}$}&\small{AS:$\begin{Bmatrix}  g&g+2\\9&2  \end{Bmatrix}$}   \\\hline

\small{\textbf{$b=5$}}&\small{-}&\small{-}&\small{-}&\small{-}  \\\hline
\small{\textbf{$b=6$}}&\small{TS:$\begin{Bmatrix}  g\\1 \end{Bmatrix}$}&\small{TS:$\begin{Bmatrix}  g \\3 \end{Bmatrix}$}&\small{TS:$\begin{Bmatrix}  g&g+2\\8&3  \end{Bmatrix}$}&\small{TS:$\begin{Bmatrix}  g&g+2&g+4&\text{NA}\\28&12&3&1 \end{Bmatrix}$} \\
&\small{AS: -}&\small{AS: -}&\small{AS:$\begin{Bmatrix}  g+2\\2  \end{Bmatrix}$}&\small{AS:$\begin{Bmatrix}  g+2&g+4\\3&1  \end{Bmatrix}$}  \\\hline

\small{\textbf{$b=7$}}&\small{-}&\small{-}&\small{-}&\small{-}  \\\hline
\small{\textbf{$b=8$}}&\small{TS:$\begin{Bmatrix}  g+2\\1 \end{Bmatrix}$}&\small{TS:$\begin{Bmatrix}  g+2 &\text{NA}\\2&1 \end{Bmatrix}$}&\small{TS:$\begin{Bmatrix}  g+2&g+4&\text{NA}\\8&1&1  \end{Bmatrix}$}&\small{TS:$\begin{Bmatrix}  g+2&g+4&g+6&\text{NA}\\29&9&1&5  \end{Bmatrix}$} \\
&\small{AS: -}&\small{AS: -}&\small{AS: -}&\small{AS: -} \\\hline

\cline{1-5}
\end{tabular}
\label{d4g6_1}
\end{table}

\begin{table}[!h]
\caption{\small{LSS properties of non-isomorphic structures of $(a,b)$ ETS Classes for left-regular graphs with $d_l = 4$ and $g = 6$.  }}
\centering
\begin{tabular}{||c"c|c|c|c|c|c|c|}
\hline
\cline{1-3}
&\small{\textbf{$a=8$}}&\small{\textbf{$a=9$}} \\\hline
\cline{1-3}
\small{\textbf{$b=0$}}&\small{TS:$\begin{Bmatrix}  g&g+2\\4&2   \end{Bmatrix}$}&\small{TS:$\begin{Bmatrix}  g&g+2\\10&6  \end{Bmatrix}$}  \\
&\small{AS:$\begin{Bmatrix}  g&g+2\\4&2  \end{Bmatrix}$}&\small{AS:$\begin{Bmatrix}  g&g+2\\10&6  \end{Bmatrix}$}  \\\hline

\small{\textbf{$b=1$}}&\small{-}&\small{-}  \\\hline
\small{\textbf{$b=2$}}&\small{TS:$\begin{Bmatrix}  g&g+2\\32&3  \end{Bmatrix}$}&\small{TS:$\begin{Bmatrix}  g&g+2&g+4\\127&24&3  \end{Bmatrix}$}   \\
&\small{AS:$\begin{Bmatrix}  g&g+2\\25&3  \end{Bmatrix}$}&\small{AS:$\begin{Bmatrix} g&g+2&g+4\\102&21&3  \end{Bmatrix}$}   \\\hline

\small{\textbf{$b=3$}}&\small{-}&\small{-}  \\\hline
\small{\textbf{$b=4$}}&\small{TS:$\begin{Bmatrix}  g&g+2&g+4&g+6&\text{NA}\\101&18&3&1&1  \end{Bmatrix}$}&\small{TS:$\begin{Bmatrix}  g&g+2&g+4&g+6&\text{NA}\\460&165&26&7&5 \end{Bmatrix}$}   \\
&\small{AS:$\begin{Bmatrix}  g&g+2&g+4\\34&15&1  \end{Bmatrix}$}&\small{AS:$\begin{Bmatrix}  g&g+2&g+4&g+6&\text{NA}\\154&110&16&3&2  \end{Bmatrix}$}   \\\hline

\small{\textbf{$b=5$}}&\small{-}&\small{-}  \\\hline
\small{\textbf{$b=6$}}&\small{TS:$\begin{Bmatrix}  g&g+2&g+4&g+6&\text{NA}\\116&81&21&6&7  \end{Bmatrix}$}& \small{TS:$\begin{Bmatrix}  g&g+2&g+4&g+6&g+8&\text{NA}\\523&617&149&51&6&33  \end{Bmatrix}$}  \\
&\small{AS:$\begin{Bmatrix}  g+2&g+4&g+6&\text{NA}\\22&4&1&1  \end{Bmatrix}$}& \small{AS:$\begin{Bmatrix}  g+2&g+4&g+6&g+8&\text{NA}\\131&32&10&1&3  \end{Bmatrix}$}  \\\hline

\small{\textbf{$b=7$}}&\small{-}&\small{-}  \\\hline
\small{\textbf{$b=8$}}&\small{TS:$\begin{Bmatrix}  g+2&g+4&g+6&g+8&g+10&\text{NA}\\144&63&21&1&1&20 \end{Bmatrix}$}& \small{TS:$\begin{Bmatrix}  g+2&g+4&g+6&g+8&g+10&\text{NA}\\855&446&173&30&5&104  \end{Bmatrix}$} \\
&\small{AS:$\begin{Bmatrix} g+4&g+6\\3&2  \end{Bmatrix}$}& \small{AS:$\begin{Bmatrix}  g+4&g+6&g+8&\text{NA}\\18&6&1&2  \end{Bmatrix}$} \\\hline

\cline{1-3}
\end{tabular}
\label{d4g6_2}
\end{table}

For graphs with $d_l > 3$, not every ETS is an absorbing set. To identify the structures that are absorbing sets, we use
the notation ``AS'' in the tables. For each class of ETS, we thus have two sets of entries: the ones for absorbing sets
denoted by ``AS'', and those that correspond to all the ETSs denoted by ``TS'' to stand for ``trapping sets''.
For each entry, the number of possible non-absorbing set structures can be obtained by subtracting the corresponding results in the two sets.
For example, the results for the ETS class $(5,4)$ in Table~\ref{d4g6_1} indicate that there are two non-isomorphic structures for this class that are both
LSSs of cycles of length $g$. Only one of the two structures however, is an absorbing set.

Tables~\ref{d4g6_1} and~\ref{d4g6_2} indicate that an overwhelming majority of ETS structures satisfy the LSS property
with respect to short cycles. In particular, all the ETSs of size less than 7 with less than 6 unsatisfied check nodes
are LSSs of 6-cycles.

\subsubsection{$d_l = 4$ and $g = 8$}

\begin{theorem}
For left-regular graphs with $d_l = 4$ and $g = 8$, the multiplicity of non-isomorphic LSS$_x$ structures for different values
of $x$ are listed in Table~\ref{d4g8} for different classes of ETSs in $\mathcal{T}$.
\end{theorem}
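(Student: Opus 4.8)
The plan is to follow the same two-stage procedure used in the proof of Theorem~\ref{thm1}: for each admissible $(a,b)$ class (with $d_l=4$, $g=8$, and the ranges of $a,b$ in Table~\ref{d4g8}) first enumerate \emph{all} non-isomorphic $(a,b)$ ETS structures in $\mathcal{T}$ via their normal graphs, and then classify each such structure according to the smallest cycle length for which it has the LSS property. Tallying the results over the non-isomorphic structures of each class produces the entries of Table~\ref{d4g8}.

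For the enumeration step, I would first record the parameters of the normal graph of an $(a,b)$ ETS with $d_l=4$: it has exactly $a$ nodes and $(4a-b)/2$ edges (each variable node contributes $4$ edge-ends, $b$ of which go to degree-$1$ checks, the remaining $4a-b$ forming $(4a-b)/2$ degree-$2$ checks, one normal-graph edge each); every node has normal-graph degree between $2$ (each variable node of a set in $\mathcal{T}$ is incident to at least two degree-$2$ checks) and $4$; and, since $g=8$, the normal graph is simple and triangle-free, because a $k$-cycle in the normal graph is a $2k$-cycle in the Tanner graph, so excluding $4$- and $6$-cycles in $G$ excludes double edges and triangles in the normal graph. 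Feeding these constraints — $a$ nodes, $(4a-b)/2$ edges, minimum degree $2$, maximum degree $4$, girth at least $4$, connected, bipartite and non-bipartite — to the \emph{nauty} program as in Section~III, and then reconstructing each output graph into a Tanner-graph ETS (replace each edge by a length-$2$ path through a new degree-$2$ check, attach $4-d(v)$ degree-$1$ checks to every node $v$) and verifying membership in $\mathcal{T}$ and girth $\ge 8$, yields exactly the non-isomorphic $(a,b)$ ETSs of $\mathcal{T}$ for these parameters. Classes for which no structure survives are the ``$-$'' entries; these can also be argued directly as in Theorem~\ref{thm1}: $(4a-b)/2$ must be an integer, so $b$ is even (hence all the odd-$b$ rows are empty), and $g=8$ forces $a\ge 4$.

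For the classification step, for each surviving structure $\mathcal{S}$ I would run Algorithm~1 with $\mathcal{L}_{\mathrm{in}}$ equal to the set of shortest cycles in $\mathcal{S}$ (length $\ell$) and $k=a$: if $\mathcal{S}$ appears in the output, it is an LSS$_\ell$ structure; otherwise the run is repeated with the next larger cycle length present in $\mathcal{S}$, and so on, $\mathcal{S}$ being declared ``NA'' if all of its cycle lengths are exhausted without $\mathcal{S}$ being produced. Counting the outcomes over all non-isomorphic structures of each class gives the ``TS'' multiplicities in Table~\ref{d4g8}; restricting the same count to the structures satisfying the absorbing-set condition — every variable node incident to strictly more degree-$2$ than degree-$1$ checks, which is read off directly from the normal-graph degree of that node — gives the ``AS'' rows.

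The main obstacle is not conceptual but one of exhaustive, error-free bookkeeping: one must be certain that the chosen \emph{nauty} parameters capture exactly the admissible normal graphs (so that no structure is missed and none is spuriously admitted after the $\mathcal{T}$/girth filter), and that the recursive expansion in Algorithm~1 is carried out to completion for every structure, since only a genuinely exhaustive search certifies an ``NA'' classification. For the larger values of $a$ and $b$ the number of candidate normal graphs and the number of intermediate ETSs in the expansion grow quickly, so the verification is inherently computational; this is why, following the remark after Theorem~\ref{thm1}, the detailed proof is omitted here and only the resulting table is reported.
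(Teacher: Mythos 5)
Your proposal is correct and follows essentially the same route the paper takes (and explicitly defers to) for this theorem: exhaustive generation of the non-isomorphic normal graphs with the stated node/edge counts and degree bounds via \emph{nauty}, followed by classification of each structure with Algorithm~1 starting from its shortest cycles, exactly as in the proof of Theorem~\ref{thm1}. Your added observations — that the normal graph has $(4a-b)/2$ edges so $b$ must be even, that $g=8$ translates to a simple triangle-free normal graph, and that the ``AS'' rows are obtained by requiring every normal-graph degree to be at least $3$ — are all consistent with the paper's framework.
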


\begin{table}[!h]
\caption{\small{LSS properties of non-isomorphic structures of $(a,b)$ ETS Classes for left-regular graphs with $d_l = 4$ and $g = 8$.  }}
\centering
\begin{tabular}{||c"c|c|c|c|c|c|c|}
\hline
\cline{1-7}
&\small{\textbf{$a=4$}}&\small{\textbf{$a=5$}}&\small{\textbf{$a=6$}}&\small{\textbf{$a=7$}}&\small{\textbf{$a=8$}}&\small{\textbf{$a=9$}} \\\hline
\cline{1-7}
\small{\textbf{$b=0$}}&\small{-}&\small{-}&\small{-}&\small{-}&\small{TS: $\begin{Bmatrix}  g \\1 \end{Bmatrix}$}&\small{TS: -}  \\
&\small{-}&\small{-}&\small{-}&\small{-}&\small{AS: $\begin{Bmatrix}  g \\1 \end{Bmatrix}$}&\small{AS: -}  \\\hline

\small{\textbf{$b=1$}}&\small{-}&\small{-}&\small{-}&\small{-}&\small{-}&\small{-}  \\\hline
\small{\textbf{$b=2$}}&\small{-}&\small{-}&\small{-}&\small{-}&\small{TS: $\begin{Bmatrix}  g\\1  \end{Bmatrix}$}&\small{TS: $\begin{Bmatrix}  g \\2 \end{Bmatrix}$}   \\
&\small{-}&\small{-}&\small{-}&\small{-}&\small{AS: $\begin{Bmatrix}  g\\1  \end{Bmatrix}$}&\small{AS: $\begin{Bmatrix}  g \\1 \end{Bmatrix}$}   \\\hline

\small{\textbf{$b=3$}}&\small{-}&\small{-}&\small{-}&\small{-}&\small{-}&\small{-}  \\\hline
\small{\textbf{$b=4$}}&\small{-}&\small{-}&\small{-}&\small{TS: $\begin{Bmatrix}  g \\1 \end{Bmatrix}$}&\small{TS: $\begin{Bmatrix}  g\\2  \end{Bmatrix}$}&\small{TS: $\begin{Bmatrix}  g\\7  \end{Bmatrix}$}   \\
&\small{-}&\small{-}&\small{-}&\small{AS: $\begin{Bmatrix}  g\\1  \end{Bmatrix}$}&\small{AS: $\begin{Bmatrix}  g \\1 \end{Bmatrix}$}&\small{AS: $\begin{Bmatrix}  g \\3 \end{Bmatrix}$}   \\\hline

\small{\textbf{$b=5$}}&\small{-}&\small{-}&\small{-}&\small{-}&\small{-}&\small{-}  \\\hline
\small{\textbf{$b=6$}}&\small{-}&\small{-}&\small{TS: $\begin{Bmatrix}  g\\1  \end{Bmatrix}$}&\small{TS: $\begin{Bmatrix}  g \\1 \end{Bmatrix}$}&\small{TS: $\begin{Bmatrix}  g \\5 \end{Bmatrix}$}& \small{TS: $\begin{Bmatrix}  g&g+2\\18&1  \end{Bmatrix}$}  \\
&\small{-}&\small{-}&\small{AS: $\begin{Bmatrix}  g\\1  \end{Bmatrix}$}&\small{AS: -}&\small{AS: $\begin{Bmatrix}  g \\2 \end{Bmatrix}$}& \small{AS: $\begin{Bmatrix}  g\\5  \end{Bmatrix}$}  \\\hline

\small{\textbf{$b=7$}}&\small{-}&\small{-}&\small{-}&\small{-}&\small{-}&\small{-}  \\\hline
\small{\textbf{$b=8$}}&\small{TS: $\begin{Bmatrix}  g\\1  \end{Bmatrix}$}&\small{TS: $\begin{Bmatrix}  g\\1  \end{Bmatrix}$}&\small{TS: $\begin{Bmatrix}  g \\2 \end{Bmatrix}$}&\small{TS: $\begin{Bmatrix}  g\\3  \end{Bmatrix}$}&\small{TS: $\begin{Bmatrix}  g&g+2&g+4\\10&2&2  \end{Bmatrix}$}& \small{TS: $\begin{Bmatrix}  g&g+2&g+4 \\36&10&4 \end{Bmatrix}$} \\
&\small{AS: -}&\small{AS: -}&\small{AS: -}&\small{AS: -}&\small{AS: $\begin{Bmatrix}  g+2&g+4\\1&1  \end{Bmatrix}$}& \small{AS: $\begin{Bmatrix}  g\\3  \end{Bmatrix}$} \\\hline

\cline{1-7}
\end{tabular}
\label{d4g8}
\end{table}

The results of Table~\ref{d4g8} indicate that all the ETSs of size less than 10 with less than 9 unsatisfied check nodes satisfy the LSS property.
Comparison with the results of Tables~\ref{d4g6_1} and \ref{d4g6_2} reveals that by increasing the girth of the graph from 6 to 8,
the number of classes whose structures all satisfy the LSS property increases.

\subsubsection{$d_l = 5$ and $g = 6$}

\begin{theorem}
For left-regular graphs with $d_l = 5$ and $g = 6$, the multiplicity of non-isomorphic LSS$_x$ structures for different values
of $x$ are listed in Table~\ref{d5g6} for different classes of ETSs in $\mathcal{T}$.
\end{theorem}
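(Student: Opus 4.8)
The plan is to prove the result by the same software-assisted procedure used in the proof of Theorem~\ref{thm1}, now specialized to the parameters $d_l = 5$ and $g = 6$. For each admissible class $(a,b)$ in the range covered by Table~\ref{d5g6} I would carry out four steps: (1) enumerate all non-isomorphic \emph{normal} graphs of $(a,b)$ ETSs; (2) expand each normal graph back to the induced Tanner subgraph and discard the ones not in $\mathcal{T}$; (3) classify the surviving structures as absorbing or non-absorbing; and (4) run Algorithm~1 on each structure to determine the smallest cycle length $x$ for which it is an LSS$_x$ structure (or mark it ``NA''), and tally the multiplicities.

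For step (1) I would use the counting identities specific to $d_l = 5$. An $(a,b)$ ETS here has $5a$ edges in its induced subgraph, $b$ degree-$1$ (unsatisfied) check nodes, and $(5a-b)/2$ degree-$2$ (satisfied) check nodes; hence its normal graph has exactly $a$ nodes and $(5a-b)/2$ edges. Because every variable node of a set in $\mathcal{T}$ is adjacent to at least two degree-$2$ check nodes, each normal-graph node has degree between $2$ and $5$, and since $g = 6$ forbids $4$-cycles in the Tanner graph the normal graph must be simple. These constraints --- $a$ nodes, $(5a-b)/2$ edges, minimum degree $2$, maximum degree $5$, connected and simple --- are exactly the inputs accepted by the \emph{nauty} generator, which then returns every non-isomorphic candidate (note that, in contrast to the $g = 8$ case, no girth constraint is imposed on the normal graph itself, since a triangle there is merely a $6$-cycle in the Tanner graph). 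Each candidate is turned back into a Tanner induced subgraph by attaching $5 - d(v)$ degree-$1$ check nodes to every normal-graph node $v$. The classes marked ``$-$'' in Table~\ref{d5g6} are accounted for by three elementary facts that I would record formally in the manner of Theorem~\ref{thm1}: the parity condition that $(5a-b)/2$ be an integer, i.e.\ $a \equiv b \pmod 2$; the fact that $b \le 3a$, since a variable node of a $\mathcal{T}$-set is incident to at most $d_l - 2 = 3$ unsatisfied check nodes; and the fact that the $(5a-b)/2$ edges must fit in a simple graph on $a$ nodes with minimum degree $2$, i.e.\ $a \le (5a-b)/2 \le \binom{a}{2}$, together with the convention $a \ge 4$.

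For steps (3) and (4): absorbing-set status is read directly from the normal graph, since a variable node of normal-degree $d_v$ is adjacent to $d_v$ satisfied and $5 - d_v$ unsatisfied check nodes, so $\mathcal{S}$ is an absorbing set precisely when every normal-graph node has degree at least $3$. To test the LSS property of a structure $\mathcal{S}$, I would start Algorithm~1 from the set of all shortest cycles of $\mathcal{S}$ and check whether the recursive expansion recovers $\mathcal{S}$; if not, I repeat with the next-longer cycle length, stopping at the first success (which fixes $x$) or, if all cycle lengths are exhausted, marking $\mathcal{S}$ as ``NA''. By the Remark following Routine~1, the set $\mathcal{O}_2(\mathcal{S})$ used in each expansion contains \emph{every} variable node that could extend an ETS by one node, so this search is exhaustive over all nested chains, and the output counts reproduce the entries of Table~\ref{d5g6}, split into the ``TS'' and ``AS'' rows.

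The main obstacle is purely computational rather than conceptual. For $d_l = 5$ the number of normal-graph candidates returned by \emph{nauty}, and the branching of the recursive LSS search, grow rapidly with $a$ and $b$, so the enumeration and the per-structure checks must be carried out by machine; what has to be argued is only the \emph{completeness} of this process --- that \emph{nauty} returns every non-isomorphic normal graph with the prescribed parameters, and that Algorithm~1, driven by Lemma~\ref{lemma1}, explores every layered chain from a given cycle. Both of these are guaranteed by results already available (the correctness of the \emph{nauty} generator, and Lemma~\ref{lemma1} with its accompanying Remark), so the remaining work is a lengthy but routine instance of the Theorem~\ref{thm1} template, which is why the paper states it is omitted.
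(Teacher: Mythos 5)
Your proposal is correct and follows essentially the same route as the paper, which omits this proof explicitly because it is "similar in nature to that of Theorem~\ref{thm1}": normal-graph enumeration via \emph{nauty} under the degree, edge-count, and simplicity constraints, followed by the Algorithm~1 LSS test cycle-length by cycle-length. Your specialized counting identities for $d_l=5$ (the parity condition $a\equiv b \pmod 2$, the bound $b\le 3a$, and the simple-graph edge bound), the absorbing-set criterion via normal degree at least $3$, and the observation that triangles are permitted in the normal graph when $g=6$ are all accurate and consistent with the entries of Table~\ref{d5g6}.
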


\begin{table}[!h]
\caption{\small{LSS properties of non-isomorphic structures of $(a,b)$ ETS Classes for left-regular graphs with $d_l = 5$ and $g = 6$.   }}
\centering
\begin{adjustwidth}{-.6cm}{}
\renewcommand{\tabcolsep}{2pt}
\begin{tabular}{||c"c|c|c|c|c|c|c|}
\hline
\cline{1-7}
&\small{\textbf{$a=4$}}&\small{\textbf{$a=5$}}&\small{\textbf{$a=6$}}&\small{\textbf{$a=7$}}&\small{\textbf{$a=8$}}&\small{\textbf{$a=9$}} \\\hline
\cline{1-7}
\small{\textbf{$b=0$}}&\small{-}&\small{-}&\small{TS:$\begin{Bmatrix}  g \\1 \end{Bmatrix}$}&\small{-}&\small{TS:$\begin{Bmatrix}  g \\3 \end{Bmatrix}$}&\small{-}  \\
&\small{-}&\small{-}&\small{AS:$\begin{Bmatrix}  g \\1 \end{Bmatrix}$}&\small{-}&\small{AS:$\begin{Bmatrix}  g \\3 \end{Bmatrix}$}&\small{-}  \\\hline

\small{\textbf{$b=1$}}&\small{-}&\small{-}&\small{-}&\small{TS:$\begin{Bmatrix}  g\\1  \end{Bmatrix}$}&\small{-}& \small{TS:$\begin{Bmatrix}  g \\28 \end{Bmatrix}$}   \\
&\small{-}&\small{-}&\small{-}&\small{AS:$\begin{Bmatrix}  g\\1  \end{Bmatrix}$}&\small{-}& \small{AS:$\begin{Bmatrix}  g \\28 \end{Bmatrix}$}   \\\hline

\small{\textbf{$b=2$}}&\small{-}&\small{-}&\small{TS:$\begin{Bmatrix}  g\\1  \end{Bmatrix}$}&\small{-}&\small{TS:$\begin{Bmatrix}  g\\16  \end{Bmatrix}$}&\small{-}   \\
&\small{-}&\small{-}&\small{AS:$\begin{Bmatrix}  g\\1  \end{Bmatrix}$}&\small{-}&\small{AS:$\begin{Bmatrix}  g \\16 \end{Bmatrix}$}&\small{-}   \\\hline

\small{\textbf{$b=3$}}&\small{-}&\small{-}&\small{-}&\small{TS:$\begin{Bmatrix}  g\\6  \end{Bmatrix}$}&\small{-}& \small{TS:$\begin{Bmatrix}  g \\289 \end{Bmatrix}$}  \\
&\small{-}&\small{-}&\small{-}&\small{AS:$\begin{Bmatrix}  g \\5 \end{Bmatrix}$}&\small{-}& \small{AS:$\begin{Bmatrix}  g \\276 \end{Bmatrix}$}  \\\hline

\small{\textbf{$b=4$}}&\small{-}&\small{-}&\small{TS:$\begin{Bmatrix}  g \\2 \end{Bmatrix}$}&\small{-}&\small{TS:$\begin{Bmatrix}  g\\75  \end{Bmatrix}$}&\small{-}   \\
&\small{-}&\small{-}&\small{AS:$\begin{Bmatrix}  g\\2  \end{Bmatrix}$}&\small{-}&\small{AS:$\begin{Bmatrix}  g \\68 \end{Bmatrix}$}&\small{-}   \\\hline

\small{\textbf{$b=5$}}&\small{-}&\small{TS:$\begin{Bmatrix}  g\\1  \end{Bmatrix}$}&\small{-}&\small{TS:$\begin{Bmatrix}  g \\18 \end{Bmatrix}$}&\small{-}& \small{TS:$\begin{Bmatrix}  g&g+2 \\1355&2  \end{Bmatrix}$} \\
&\small{-}&\small{AS:$\begin{Bmatrix}  g\\1  \end{Bmatrix}$}&\small{-}&\small{AS:$\begin{Bmatrix}  g \\14 \end{Bmatrix}$}&\small{-}& \small{AS:$\begin{Bmatrix}  g&g+2 \\1149&2 \end{Bmatrix}$} \\\hline

\small{\textbf{$b=6$}}&\small{-}&\small{-}&\small{TS:$\begin{Bmatrix}  g \\5 \end{Bmatrix}$}&\small{-}&\small{TS:$\begin{Bmatrix}  g&g+4 \\222&1 \end{Bmatrix}$}& \small{-}  \\
&\small{-}&\small{-}&\small{AS:$\begin{Bmatrix}  g\\4  \end{Bmatrix}$}&\small{-}&\small{AS:$\begin{Bmatrix}  g\\165  \end{Bmatrix}$}& \small{-}  \\\hline

\small{\textbf{$b=7$}}&\small{-}&\small{TS:$\begin{Bmatrix}  g \\1 \end{Bmatrix}$}&\small{-}&\small{TS:$\begin{Bmatrix}  g \\37 \end{Bmatrix}$}&\small{-}& \small{TS:$\begin{Bmatrix}  g&g+2&g+4&g+6&\text{NA} \\3768&9&6&1&3  \end{Bmatrix}$}  \\
&\small{-}&\small{AS:$\begin{Bmatrix}  g \\1 \end{Bmatrix}$}&\small{-}&\small{AS:$\begin{Bmatrix}  g\\23  \end{Bmatrix}$}&\small{-}& \small{AS:$\begin{Bmatrix}  g&g+2&g+4\\2533&7&1 \end{Bmatrix}$}  \\\hline

\small{\textbf{$b=8$}}&\small{TS:$\begin{Bmatrix}  g \\1 \end{Bmatrix}$}&\small{-}&\small{TS:$\begin{Bmatrix}  g \\8 \end{Bmatrix}$}&\small{-}&\small{TS:$\begin{Bmatrix}  g&g+2&g+4& \text{NA} \\ 453&5&2&1 \end{Bmatrix}$}& \small{-} \\
&\small{AS:$\begin{Bmatrix}  g \\1 \end{Bmatrix}$}&\small{-}&\small{AS:$\begin{Bmatrix}  g\\5  \end{Bmatrix}$}&\small{-}&\small{AS:$\begin{Bmatrix}  g&g+2 \\249&3  \end{Bmatrix}$}& \small{-} \\\hline

\small{\textbf{$b=9$}}&\small{-}&\small{TS:$\begin{Bmatrix}  g \\2 \end{Bmatrix}$}&\small{-}&\small{TS:$\begin{Bmatrix}  g&g+2 \\61&1 \end{Bmatrix}$}&\small{-}& \small{TS:$\begin{Bmatrix}  g&g+2&g+4&g+6&\text{NA} \\ 6957&66&43&7&19  \end{Bmatrix}$}  \\
&\small{-}&\small{AS:$\begin{Bmatrix}  g\\1  \end{Bmatrix}$}&\small{-}&\small{AS:$\begin{Bmatrix}  g\\25  \end{Bmatrix}$}&\small{-}& \small{AS:$\begin{Bmatrix}  g&g+2&g+4&\text{NA} \\3243&33&7&1  \end{Bmatrix}$}  \\\hline

\cline{1-7}
\end{tabular}
\end{adjustwidth}

\label{d5g6}
\end{table}

Table~\ref{d5g6} shows that except for a small fraction of ETS structures, all the rest of the structures of ETS classes with size less than 10 and with less than 10 unsatisfied
check nodes satisfy the LSS property. In particular, all  the structures of size less than 10 with less than 5 unsatisfied check nodes are LSSs of 6-cycles.

\subsubsection{$d_l = 5$ and $g = 8$}

\begin{theorem}
For left-regular graphs with $d_l = 5$ and $g = 8$, the multiplicity of non-isomorphic LSS$_x$ structures for different values
of $x$ are listed in Table~\ref{d5g8} for different classes of ETSs in $\mathcal{T}$.
\end{theorem}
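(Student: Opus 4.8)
The plan is to follow, essentially verbatim, the two-stage procedure used in the proof of Theorem~\ref{thm1} and described at the beginning of Section~V.B, now specialized to $d_l = 5$ and $g = 8$. First I would enumerate, for every admissible class $(a,b)$ with $a,b \le 10$, all non-isomorphic \emph{normal} graphs of $(a,b)$ ETSs in $\mathcal{T}$. Such a normal graph has $a$ nodes and $(5a-b)/2$ edges, minimum degree at least $2$ (every variable node of a set in $\mathcal{T}$ meets at least two degree-two check nodes) and maximum degree at most $5$; moreover, since a $4$-cycle or a $6$-cycle in the Tanner graph corresponds, respectively, to a pair of parallel edges or to a triangle in the normal graph, the girth condition $g = 8$ forces the normal graph to be simple with girth at least $4$. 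Feeding these parameters to the \emph{nauty} program, exactly as in the $(6,6)$ example of Section~III, returns the complete list of candidate normal structures; lifting each one back to an induced subgraph by the edge-splitting and degree-one padding construction of Section~III yields the genuine non-isomorphic structures of the class. This step also accounts for the ``$-$'' entries: $(5a-b)/2$ must be a non-negative integer, so no class with $a \not\equiv b \pmod 2$ can occur, and $g = 8$ excludes $a < 4$.

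Second, for each non-isomorphic structure $\mathcal{S}$ obtained above I would determine the smallest cycle length $x$ for which it has the LSS property. Starting from the set of shortest cycles contained in $\mathcal{S}$ (length $\ell = g = 8$ whenever $\mathcal{S}$ contains an $8$-cycle), I run Algorithm~1 with those cycles as input and with size bound $k = a$, and check whether $\mathcal{S}$ is among the outputs; since Routine~1 is guaranteed to generate \emph{every} one-node extension of an ETS that stays in $\mathcal{T}$ (Lemma~\ref{lemma1} and the remark following it), $\mathcal{S}$ is an LSS$_\ell$ structure if and only if it appears. If it does not, I repeat with the cycles of the next admissible length ($\ell+2$, then $\ell+4$, $\dots$) as input, until either $\mathcal{S}$ is produced -- in which case it is recorded as LSS$_x$ for that value of $x$ -- or all cycles of $\mathcal{S}$ have been exhausted without success, in which case $\mathcal{S}$ is tagged ``NA''. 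Tallying these outcomes over all structures in a class gives the multiplicity vectors displayed in Table~\ref{d5g8}. Since for $d_l = 5$ not every ETS is an absorbing set, the ``AS'' rows are produced by the same bookkeeping restricted to the structures in which every variable node meets strictly more even-degree than odd-degree check nodes.

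The main obstacle here is one of scale rather than of mathematical content: for $d_l = 5$ and $g = 8$ the number of non-isomorphic normal graphs to screen grows quickly with $a$ and $b$ (already in the thousands for the largest classes, as the entries of Table~\ref{d5g8} show), so both the enumeration and the repeated LSS tests must be organized carefully and carried out by machine. What keeps this feasible is precisely the normal-graph reduction, which shrinks the search from bipartite graphs on $\Theta(a)$ variable nodes plus $\Theta(a)$ check nodes to general graphs on just $a$ nodes. A secondary point needing care is the girth bookkeeping: one must check that lifting a normal graph of girth at least $4$ never creates a Tanner-graph cycle shorter than $8$ -- it does not, since each normal-graph edge becomes a length-two path through a fresh degree-two check -- and, when a structure is classified as LSS$_x$, that every intermediate set $\mathcal{S}^{(i)}$ in the nested sequence lies in $\mathcal{T}$, which is automatic from the definition of $\mathcal{O}_2$ in Routine~1. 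As consistency checks one can confirm that the per-class totals agree with the counts of non-isomorphic graphs returned independently by \emph{nauty}, and that every girth-$8$ structure, being of girth at least $6$, also appears with the same LSS$_x$ label in the corresponding $d_l = 5$, $g = 6$ tally of Table~\ref{d5g6}.
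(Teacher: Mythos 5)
Your proposal follows essentially the same two-stage procedure the paper uses (and explicitly invokes for this theorem by reference to the proof of Theorem~\ref{thm1}): enumerate the non-isomorphic normal graphs with \emph{nauty} under the degree, edge-count, and girth constraints implied by $d_l=5$ and $g=8$, then classify each structure as LSS$_x$ or NA by running Algorithm~1 from progressively longer cycles, with the ``$-$'' entries handled by the same parity and counting arguments. The additional observations you make (the girth-$8$ condition translating to a simple, triangle-free normal graph, and the cross-check against Table~\ref{d5g6}) are correct refinements consistent with the paper's method.
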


\begin{table}[!h]
\caption{\small{LSS properties of non-isomorphic structures of $(a,b)$ ETS Classes for left-regular graphs with $d_l = 5$ and $g = 8$.  }}
\centering
\begin{tabular}{||c"c|c|c|c|c|c|c|}
\hline
\cline{1-4}
&\small{\textbf{$a\le7$}}&\small{\textbf{$a=8$}}&\small{\textbf{$a=9$}}\\\hline
\cline{1-4}
\small{\textbf{$b=0$}}&\small{-}&\small{-}&\small{-} \\\hline
\small{\textbf{$b=1$}}&\small{-}&\small{-}&\small{-}  \\\hline
\small{\textbf{$b=2$}}&\small{-}&\small{-}&\small{-}  \\\hline
\small{\textbf{$b=3$}}&\small{-}&\small{-}&\small{-}  \\\hline
\small{\textbf{$b=4$}}&\small{-}&\small{-}&\small{-}  \\\hline
\small{\textbf{$b=5$}}&\small{-}&\small{-}&\small{TS: $\begin{Bmatrix}  g \\1 \end{Bmatrix}$} \\
&\small{-}&\small{-}&\small{AS: $\begin{Bmatrix}  g \\1 \end{Bmatrix}$} \\\hline

\small{\textbf{$b=6$}}&\small{-}&\small{-}&\small{-}   \\\hline

\small{\textbf{$b=7$}}&\small{-}&\small{-}&\small{TS: $\begin{Bmatrix}  g\\1  \end{Bmatrix}$}\\
&\small{-}&\small{-}&\small{AS: $\begin{Bmatrix}  g \\1 \end{Bmatrix}$}\\\hline

\small{\textbf{$b=8$}}&\small{-}&\small{TS: $\begin{Bmatrix}  g \\1 \end{Bmatrix}$}&\small{-} \\
&\small{-}&\small{AS: $\begin{Bmatrix}  g \\1 \end{Bmatrix}$}&\small{-} \\\hline

\small{\textbf{$b=9$}}&\small{-}&\small{-}&\small{TS: $\begin{Bmatrix}  g\\3  \end{Bmatrix}$}  \\
&\small{-}&\small{-}&\small{AS: $\begin{Bmatrix}  g\\2  \end{Bmatrix}$}  \\\hline

\cline{1-4}
\end{tabular}
\label{d5g8}
\end{table}

Tables~\ref{d4g8} and \ref{d5g8}, in comparison with their counterparts for $g = 6$, show that in graphs with larger girth,
small trapping sets with small number of unsatisfied check nodes cannot exist. In particular, Table~\ref{d5g8} indicates that
in left-regular graphs with $d_l = 5$ and $g = 8$ no $(a,b)$ ETS with $a \leq 7$ and $b < 10$ can exist. The inspection of Table~\ref{d5g6}
however, reveals that many of these ETS classes can exist in left-regular graphs with $d_l = 5$ and $g = 6$.
The results of Table~\ref{d5g8} also demonstrate that, for a left-regular graph with $d_l = 5$ and $g = 8$,  all the $(a,b)$ ETSs
with $a < 10$ and $b < 10$ are LSSs of 8-cycles.

\begin{remark}
It can be proved that it is not possible to have any $(a,b)$ ETS with $a < 10$ and $b < 10$ in a left-regular graph with $d_l = 5$ and $g > 8$.
\end{remark}

\subsubsection{$d_l = 6$ and $g = 6$}

\begin{theorem}
For left-regular graphs with $d_l = 6$ and $g = 6$, the multiplicity of non-isomorphic LSS$_x$ structures for different values
of $x$ are listed in Table~\ref{d6g6} for different classes of ETSs in $\mathcal{T}$.
\end{theorem}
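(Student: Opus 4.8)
The plan is to follow the same route as the proof of Theorem~\ref{thm1}: for every admissible pair $(a,b)$ in the range covered by Table~\ref{d6g6} (recall that for $g = 6$ we only consider $a \ge 4$), first enumerate all non-isomorphic normal graphs of $(a,b)$ ETSs of a left-regular code with $d_l = 6$ and $g = 6$, and then test each such structure for the LSS property by running Algorithm~1, recording for each structure the smallest cycle length from which the recursive expansion reproduces it.

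First I would fix the input parameters for the \emph{nauty} program. As in Section~III, the induced subgraph of an $(a,b)$ ETS with $d_l = 6$ has $6a$ edges, $b$ degree-one check nodes and $(6a-b)/2$ degree-two check nodes, so its normal graph is a graph on $a$ vertices with exactly $(6a-b)/2$ edges; moreover $\sum_v (6 - d(v)) = 6a - 2|E| = b$ holds identically, so the edge count and the degree deficiency of the normal graph are automatically consistent. Membership of $\mathcal{S}$ in $\mathcal{T}$ forces every vertex of the normal graph to have degree at least $2$, while $d_l = 6$ caps the degree at $6$; and because $g = 6$ forbids both cycles of length four and parallel edges in $G$ (a Tanner-graph cycle of length $2k$ corresponds to a normal-graph cycle of length $k$), the normal graph is a simple graph with no further girth restriction. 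Hence I would have \emph{nauty} generate all connected simple graphs, bipartite and non-bipartite alike, on $a$ vertices with $(6a-b)/2$ edges and vertex degrees between $2$ and $6$; by the one-to-one correspondence between an ETS and its normal graph, every such output is a valid normal structure and the list obtained is complete, so no post-filtering on the degree distribution is needed.

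Next, for every non-isomorphic structure $\mathcal{S}$ produced above I would carry out the LSS test described in Section~V: enumerate the cycles of $\mathcal{S}$ by increasing length $\ell$, starting with the shortest cycles actually present in $\mathcal{S}$ (of length at least $g = 6$), and for each $\ell$ feed the set of $\ell$-cycles of $\mathcal{S}$ as input to Algorithm~1; if some run returns $\mathcal{S}$ among its outputs, classify $\mathcal{S}$ as an LSS$_\ell$ structure for the smallest such $\ell$, and otherwise, once all cycles are exhausted, label it ``NA''. Tallying these classifications over all structures in each class gives the multiplicities recorded in Table~\ref{d6g6}. For the entries marked ``$-$'' I would supply the counting argument parallel to the two facts used in the proof of Theorem~\ref{thm1}: since $(6a-b)/2$ must be a non-negative integer and $6a$ is even, $b$ is necessarily even; and since the simple normal graph has at most $\binom{a}{2}$ edges, $(6a-b)/2 \le \binom{a}{2}$ gives $b \ge 6a - a(a-1) = a(7-a)$, which already excludes every class with $b \le 10$ when $a = 4$ (it would need $b \ge 12$) and leaves only the class $(5,10)$ when $a = 5$; the remaining non-existent classes follow from the parity of $b$ together with the same girth-and-degree bookkeeping.

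The only genuine obstacle is computational rather than conceptual. For $d_l = 6$ the admissible normal-graph degrees run all the way up to $6$, so the number of candidate simple graphs \emph{nauty} must enumerate grows rapidly with $a$ (the analogue of the reduction from 53,727,932 candidate graphs to 11 in the $(6,6)$ example becomes considerably larger here), and for a structure that ultimately turns out to be ``NA'' the LSS test must exhaust cycles of several different lengths and many expansion orders before certifying the negative result. Keeping the deduplication of isomorphic structures correct, and reporting the LSS index with respect to the \emph{minimum} admissible cycle length, are the steps most prone to error; since the mathematics underlying the verification is identical to that of Theorem~\ref{thm1}, the detailed proof is omitted.
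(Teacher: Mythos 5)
Your proposal is correct and follows essentially the same route as the paper, which explicitly omits this proof as being analogous to that of Theorem~\ref{thm1}: enumerate the non-isomorphic normal graphs (simple connected graphs on $a$ vertices with $(6a-b)/2$ edges and degrees between $2$ and $6$) via \emph{nauty}, test each structure with Algorithm~1 against its cycles in increasing order of length, and rule out the ``$-$'' entries by the parity of $b$ together with the edge bound $(6a-b)/2 \le \binom{a}{2}$. Your explicit derivation of $b \ge a(7-a)$ for the non-existent classes is a correct instantiation of the counting argument the paper only sketches for the $d_l=3$ case.
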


\begin{table}[!h]
\caption{\small{LSS properties of non-isomorphic structures of $(a,b)$ ETS Classes for left-regular graphs with $d_l = 6$ and $g = 6$.  }}
\centering
\renewcommand{\arraystretch}{1.1}
\begin{tabular}{||c"c|c|c|c|c|c|c|}
\hline
\cline{1-7}
&\small{\textbf{$a=4$}}&\small{\textbf{$a=5$}}&\small{\textbf{$a=6$}}&\small{\textbf{$a=7$}}&\small{\textbf{$a=8$}}&\small{\textbf{$a=9$}} \\\hline
\cline{1-7}
\small{\textbf{$b=0$}}&\small{-}&\small{-}&\small{-}&\small{TS: $\begin{Bmatrix}  g \\1 \end{Bmatrix}$}&\small{TS: $\begin{Bmatrix}  g \\1 \end{Bmatrix}$}&\small{TS: $\begin{Bmatrix}  g \\4  \end{Bmatrix}$}  \\
\small{\textbf{}}&\small{-}&\small{-}&\small{-}&\small{AS: $\begin{Bmatrix}  g \\1 \end{Bmatrix}$}&\small{AS: $\begin{Bmatrix}  g \\1 \end{Bmatrix}$}&\small{AS: $\begin{Bmatrix}  g \\4 \end{Bmatrix}$}  \\ \hline
\small{\textbf{$b=1$}}&\small{-}&\small{-}&\small{-}&\small{-}&\small{-}&\small{-}  \\\hline
\small{\textbf{$b=2$}}&\small{-}&\small{-}&\small{-}&\small{TS: $\begin{Bmatrix}  g \\1   \end{Bmatrix}$}&\small{TS: $\begin{Bmatrix}  g \\3 \end{Bmatrix}$}&\small{TS: $\begin{Bmatrix}  g\\25  \end{Bmatrix}$}   \\
\small{\textbf{}}&\small{-}&\small{-}&\small{-}&\small{AS: $\begin{Bmatrix}  g \\1   \end{Bmatrix}$}&\small{AS: $\begin{Bmatrix}  g \\3 \end{Bmatrix}$}&\small{AS: $\begin{Bmatrix}  g\\25  \end{Bmatrix}$}   \\ \hline

\small{\textbf{$b=3$}}&\small{-}&\small{-}&\small{-}&\small{-}&\small{-}&\small{-}  \\\hline
\small{\textbf{$b=4$}}&\small{-}&\small{-}&\small{-}&\small{TS: $\begin{Bmatrix}  g \\2  \end{Bmatrix}$}&\small{TS: $\begin{Bmatrix}  g \\15 \end{Bmatrix}$}&\small{TS: $\begin{Bmatrix}  g \\162 \end{Bmatrix}$}   \\
\small{\textbf{}}&\small{-}&\small{-}&\small{-}&\small{AS: $\begin{Bmatrix}  g \\2  \end{Bmatrix}$}&\small{AS: $\begin{Bmatrix}  g \\15 \end{Bmatrix}$}&\small{AS: $\begin{Bmatrix}  g \\162 \end{Bmatrix}$}   \\\hline

\small{\textbf{$b=5$}}&\small{-}&\small{-}&\small{-}&\small{-}&\small{-}&\small{-}  \\\hline
\small{\textbf{$b=6$}}&\small{-}&\small{-}&\small{TS: $\begin{Bmatrix}  g \\1  \end{Bmatrix}$}&\small{TS: $\begin{Bmatrix}  g \\5 \end{Bmatrix}$}&\small{TS: $\begin{Bmatrix}  g\\48  \end{Bmatrix}$}& \small{TS: $\begin{Bmatrix}  g\\726  \end{Bmatrix}$}  \\
\small{\textbf{}}&\small{-}&\small{-}&\small{AS: $\begin{Bmatrix}  g \\1  \end{Bmatrix}$}&\small{AS: $\begin{Bmatrix}  g \\5 \end{Bmatrix}$}&\small{AS: $\begin{Bmatrix}  g\\48  \end{Bmatrix}$}& \small{AS: $\begin{Bmatrix}  g\\726  \end{Bmatrix}$}  \\\hline

\small{\textbf{$b=7$}}&\small{-}&\small{-}&\small{-}&\small{-}&\small{-}&\small{-}  \\\hline
\small{\textbf{$b=8$}}&\small{-}&\small{-}&\small{TS: $\begin{Bmatrix}  g \\ 1  \end{Bmatrix}$}&\small{TS: $\begin{Bmatrix}  g \\ 10 \end{Bmatrix}$}&\small{TS: $\begin{Bmatrix}  g \\120 \end{Bmatrix}$}& \small{TS: $\begin{Bmatrix} \small  g&g+4 \\ 2273&1  \end{Bmatrix}$} \\
\small{\textbf{}}&\small{-}&\small{-}&\small{AS: $\begin{Bmatrix}  g \\ 1  \end{Bmatrix}$}&\small{AS: $\begin{Bmatrix}  g \\ 10 \end{Bmatrix}$}&\small{AS: $\begin{Bmatrix}  g \\120 \end{Bmatrix}$}& \small{AS: $\begin{Bmatrix}  g \\1157 \end{Bmatrix}$} \\ \hline

\small{\textbf{$b=9$}}&\small{-}&\small{-}&\small{-}&\small{-}&\small{-}&\small{-}  \\\hline

\small{\textbf{$b=10$}}&\small{-}&\small{TS: $\begin{Bmatrix}g \\1 \end{Bmatrix} $}&\small{TS: $\begin{Bmatrix}g \\2 \end{Bmatrix}$}&\small{TS: $\begin{Bmatrix}g \\20 \end{Bmatrix}$}&\small{TS: $\begin{Bmatrix}g \\ 260 \end{Bmatrix}$}& \small{TS: $\begin{Bmatrix} g&g+2&g+4&\text{NA}\\ 5406&2&2&1 \end{Bmatrix}$} \\

\small{\textbf{}}&\small{-}&\small{AS: $\begin{Bmatrix}g \\1 \end{Bmatrix} $}&\small{AS: $\begin{Bmatrix}g \\2 \end{Bmatrix}$}&\small{AS: $\begin{Bmatrix}g \\20 \end{Bmatrix}$}&\small{AS: $\begin{Bmatrix}g \\ 260 \end{Bmatrix}$}& \small{AS: $\begin{Bmatrix} g\\ 1620 \end{Bmatrix}$} \\

\cline{1-7}
\end{tabular}
\label{d6g6}
\end{table}

The results of Table~\ref{d6g6} indicate that for left-regular graphs with $d_l = 6$ and $g = 6$, all the possible $(a,b)$ absorbing sets with
$a < 10$ and  $b \le 10$ satisfy the LSS property with respect to 6-cycles. Moreover, except for one structure (out of 5411 structures) of the ETS class $(9,10)$,
all the structures of $(a,b)$ ETS classes with $a < 10$ and $b \leq 10$ are LSSs of short cycles. (The structure that does not satisfy the LSS property is shown in Figure~\ref{910_6}.)
Among these classes (excluding the $(9,10)$ class), only one structure out of the 2274 possible structures for the $(9,8)$ ETS class is an LSS of 10-cycles.
All the other possible structures for all the ETS classes are LSSs of 6-cycles.

\begin{figure}[!h]
\centering
\includegraphics[width=2.5in]{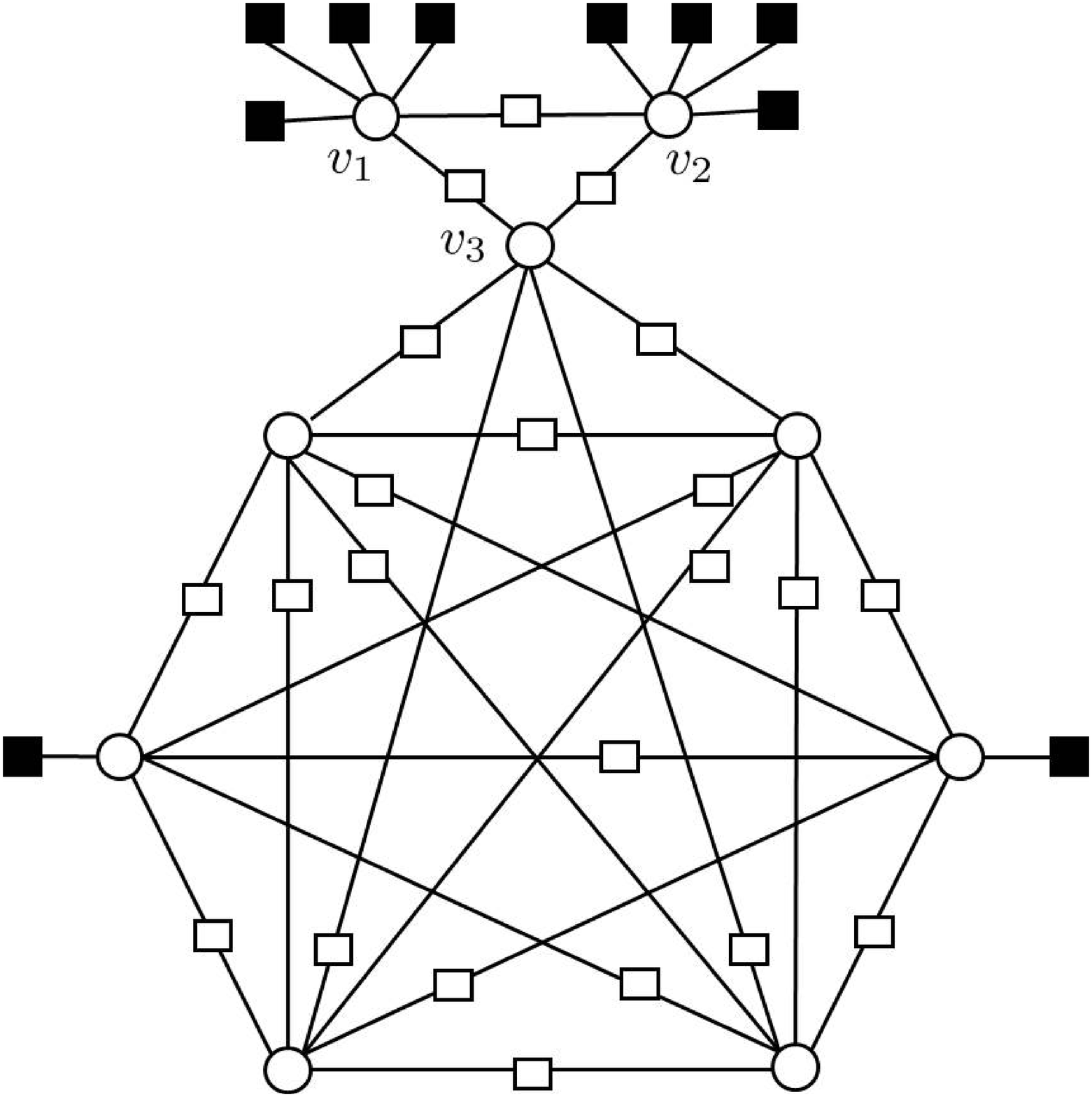}
\caption{The only possible structure for a $(9,10)$ ETS in a left-regular graph with $d_l = 6$ and $g = 6$ that does not satisfy the LSS property. }
\label{910_6}
\end{figure}

\subsubsection{$d_l = 6$ and $g > 6$}

\begin{theorem}
For left-regular graphs with $d_l = 6$ and $g > 6$, there does not exist any $(a,b)$ ETS in ${\cal T}$ with $a < 10$ and $b \le 10$.
\end{theorem}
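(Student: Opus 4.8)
\textit{Proof proposal.} The plan is to pass to the normal graph of the putative ETS and reach a contradiction from a one-line edge count combined with the observation that forbidding $4$- and $6$-cycles in the Tanner graph forces the normal graph to be simple and triangle-free; Mantel's theorem (the triangle-free case of Tur\'an's theorem) then does the rest. So suppose, for contradiction, that $\mathcal{S}\in\mathcal{T}$ is an $(a,b)$ ETS in a left-regular Tanner graph $G$ with $d_l=6$ and $g>6$, where $a<10$ and $b\le 10$. Since $G$ is bipartite, all cycles are even, so $g>6$ means $g\ge 8$. By the fact noted earlier in this section that no set in $\mathcal{T}$ with $a<4$ can exist once $g\ge 8$ (and which is in any case immediate: a variable node with at least two incident degree-$2$ checks already forces, for $a\le 3$, either a parallel edge, a $4$-cycle, or a $6$-cycle in $G$), we may assume $4\le a\le 9$.

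Next I would carry out the edge count. Because $\mathcal{S}$ is an ETS and each of the $a$ variable nodes has degree $d_l=6$, the induced subgraph $G(\mathcal{S})$ has exactly $6a$ edges, $b$ check nodes of degree one, and $e:=(6a-b)/2$ check nodes of degree two. Forming the normal graph $N$ of $\mathcal{S}$ (delete all degree-one checks, replace each degree-two check by an edge between its two neighbors) yields a graph on the $a$ variable nodes with $e$ edges, and since $b\le 10$ we get $e=3a-b/2\ge 3a-5$. Now every cycle of $G(\mathcal{S})$ visits only degree-two checks, so a cycle of length $2k$ in $G(\mathcal{S})$ corresponds to a cycle of length $k$ in $N$ and vice versa; since $G$ has no parallel edges and $g\ge 8$, the graph $G$ contains no loops, $4$-cycles, or $6$-cycles, hence $N$ has no loops, no multiple edges, and no triangles. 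Thus $N$ is a simple triangle-free graph on $a$ vertices, and Mantel's theorem gives $e\le\lfloor a^2/4\rfloor$.

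Finally I would combine the two inequalities: $3a-5\le e\le\lfloor a^2/4\rfloor$. But $3a-5>a^2/4$ is equivalent to $a^2-12a+20<0$, i.e.\ to $2<a<10$, so it holds for every integer $a$ with $4\le a\le 9$ (one may simply check $a=4,\ldots,9$ against $\lfloor a^2/4\rfloor=4,6,9,12,16,20$); this contradicts $e\le\lfloor a^2/4\rfloor$. Hence no such ETS exists. There is no genuinely hard step here: the only points requiring care are the translation of the girth hypothesis into triangle-freeness (and simplicity) of $N$ and the routine verification of $3a-5>\lfloor a^2/4\rfloor$ on $4\le a\le 9$; if one prefers to avoid invoking Mantel, the same contradiction follows from a direct neighbourhood-size argument on $N$, but the Mantel route is cleanest.
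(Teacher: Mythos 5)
Your proof is correct, and it takes a genuinely different route from the paper. The paper states this theorem without an explicit argument, deferring to the general methodology of Theorem 1: exhaustively generate all candidate non-isomorphic normal graphs with the \emph{nauty} package under the degree and girth constraints, and observe that the list is empty for every $(a,b)$ class with $a<10$ and $b\le 10$. You replace that enumeration with a two-line analytic argument: the normal graph $N$ of such an ETS has $a$ vertices and $e=(6a-b)/2\ge 3a-5$ edges, the hypothesis $g\ge 8$ (no $4$- or $6$-cycles in the Tanner graph) forces $N$ to be simple and triangle-free, and Mantel's theorem caps $e$ at $\lfloor a^2/4\rfloor$, which is strictly smaller than $3a-5$ for all $4\le a\le 9$; the cases $a\le 3$ are disposed of by the paper's own observation about $\mathcal{T}$ when $g\ge 8$. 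All the steps check out, including the cycle-length correspondence between $G(\mathcal{S})$ and $N$ and the numerical verification $3a-5>\lfloor a^2/4\rfloor$ on $4\le a\le 9$ (the parity caveat that an odd $b$ makes the class empty outright is harmless). What each approach buys: the paper's enumeration is uniform across all its theorems and, when structures \emph{do} exist, produces the explicit list needed for the LSS classification, which your argument cannot do; your argument, being purely a counting bound, is computation-free, human-checkable, and immediately exposes why the theorem's thresholds are what they are (the inequality $b\ge 6a-a^2/2$ fails first at $a=10$, $b=10$), and the same Mantel bound transfers to other $(d_l,g)$ pairs such as $d_l=5$, $g=8$ to explain the empty rows of Table VII.
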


\section{Conclusion}

In this paper, we investigated the structure of elementary trapping sets (ETS) of left-regular LDPC codes. We developed an approach
to find {\em all} the non-isomorphic structures of a given $(a,b)$ class of ETSs, where $a$ is the size and $b$ is the number of unsatisfied
check nodes of the ETS. For left-regular LDPC codes with left degrees $d_l = 3, 4, 5, 6$, and girths $g = 6, 8$, we studied
such structures and demonstrated that an overwhelming majority of them are layered supersets (LSS) of short cycles
in the Tanner graph of the code. In particular, we proved that for any category of left-regular LDPC codes with given $d_l$ and $g$,
there exist integers $\alpha$ and $\beta$ such that all the classes of $(a,b)$ ETSs with $a < \alpha$ and $b < \beta$, are
LSSs of short cycles. This implies that for any category of left-regular LDPC codes, the dominant ETSs are all
LSSs of short cycles. The LSS characterization of dominant ETSs is particularly important as it corresponds to a simple
algorithm that can find {\em all} such ETSs in a {\em guaranteed} fashion starting from the short cycles of the graph.
For any class of $(a,b)$ ETSs, the lengths of the required short cycles were provided in this paper.

One important contribution of this paper is the approach developed to exhaustively find all the
non-isomorphic structures of a given class of $(a,b)$ ETSs for arbitrary values of $a$ and $b$ and for left-regular LDPC codes
of arbitrary left degree and girth. In a more general context, the database of such structures can be
very helpful in the analysis and the design of LDPC codes with low error floor.
In particular, one can use this information to find all the ETSs of a certain class in a guaranteed fashion
regardless of whether those ETSs satisfy the LSS property or not. To the best of our knowledge,
the results presented in Tables~\ref{d3g6} -~\ref{d6g6} are the most comprehensive results available so far on the structure of
ETSs of regular LDPC codes.





\begin{thebibliography}{1}

\bibitem{ABAIT2011}
R. Asvadi, A. H. Banihashemi and M. Ahmadian-Attari, ``Lowering the error floor of LDPC codes using cyclic liftings," \emph{IEEE Trans. Inform. Theory,} vol. 57, no. 4, pp. 2213 - 2224, Apr. 2011.

\bibitem{Abu2010}
S. Abu-Surra, D. DeClercq, D. Divsalar, and W. Ryan, ``Trapping set enumerators for specific LDPC codes," Proc. \emph{Inform. Theory and Applications
Workshop,} San Diego, CA, Jan. 31- Feb. 5, 2010, pp. 1-5.


\bibitem{Abu2008}
S. Abu-Surra, W. E. Ryan, D. Divsalar, and W. Ryan, ``Ensemble Trapping Set Enumerators for
Protograph-Based Generalized LDPC Codes," Proc. \emph{Inform. Theory and Applications
Workshop,} San Diego, CA, Jan. 27�Feb. 1 2008, pp. 63�65.



\bibitem{Cavus2005}
E. Cavus and B. Daneshrad, ``A performance improvement and error floor avoidance technique for belief propagation decoding of LDPC
codes," Proc. {\em 16th IEEE International Symposium Pers. Indoor Mobile Radio Communications,} Los Angeles, CA, Sept. 2005, vol. 4, pp. 2386-2390.


%
%

\bibitem{Cole2008}
C. Cole, S. Wilson, E. Hall, and T. Giallorenzi, ``A general method for finding low error rates of LDPC codes," {\em CoRR, arxiv.org/abs/cs/0605051}.

\bibitem{Diao2012}
Qiuju Diao, Ying Yu Tai, Shu Lin, and Khaled A. S. Abdel-Ghaffar, ``Trapping set structure of finite geometry LDPC codes," Proc.
\emph{IEEE International Symposium on Information Theory
(ISIT'12)}, 2012, pp. 3088-3092.

\bibitem{Dolecek2007_1}
L. Dolecek, Z. Zhang, V. Anantharam, M.Wainwright, and B. Nikolic,
``Analysis of absorbing sets for array-based LDPC codes," Proc. {\em Interntional Conf. Communications,} Glasgow, Scotland, Jun. 24-28, 2007, pp. 6261-6268.

\bibitem{Dolecek2007}
L. Dolecek, Z. Zhang, M. Wainwright, V. Anantharam, and B. Nikolic, ``Evaluation of the
low frame error rate performance of LDPC codes using importance sampling," Proc. {\em IEEE Inform. Theory Workshop},
Lake Tahoe, CA, Sept. 2-6, 2007, pp. 202-207.

\bibitem{Dolecek2009}
L. Dolecek, P. Lee, Z. Zhang, V. Anantharam, B. Nikolic, and M. Wainwright, ``Predicting error floors of LDPC
codes: deterministic bounds and estimates,"  {\em IEEE Journal on Selected Areas in Communications},
vol. 27, no. 6, pp. 908-917, Aug. 2009.

\bibitem{Dolecek2010}
L. Dolecek, Z. Zhang, V. Anantharam, M. J. Wainwright, and B.
Nikolic, ``Analysis of absorbing sets and fully absorbing sets of
array-based LDPC codes," {\em IEEE Trans. Inform. Theory}, vol. 56, no. 1, pp. 181-201, Jan. 2010.


\bibitem{Di2002}
C. Di, D. Proietti, I. E. Telatar, T. J. Richardson, and R. L. Urbanke, ``Finite-length analysis of
low-density parity-check codes on the binary erasure channel," {\em IEEE Trans. Inform. Theory},
vol. 48, no. 6, pp. 1570-1579, June 2002.

\bibitem{Han2008}
Y. Han, and W. E. Ryan, ``LDPC decoder strategies for achieving low error floors," Proc. {\em Inform. Theory and Applications Workshop}, San Diego, CA, Jan. 2008, pp. 277-286.


\bibitem{Huang2011}
Qin Huang, Qiuju Diao, Shu Lin, and Khaled A. S. Abdel-Ghaffar, Trapping sets of structured LDPC codes," Proc.
\emph{IEEE International Symposium on Information Theory
(ISIT'11)}, 2011, pp. 1086-1090.


%





\bibitem{Ivkovic2008}
M. Ivkovic, S. K. Chilappagari, and B. Vasic, ``Eliminating trapping
sets in low-density parity-check codes by using Tanner graph covers,"
{\em IEEE Trans. Inform. Theory}, vol. 54, no. 8, pp. 3763-3768, Aug. 2008.



\bibitem{Laendner2009}
S. Laendner, T. Hehn, O. Milenkovic and J.B. Huber, ``The trapping redundancy of linear block codes,"
{\em IEEE Trans. Inform. Theory}, vol. 55, no. 1, pp. 53-63, Jan. 2009.

\bibitem{Laendner2005}
S. Laendner and O. Milenkovic, ``Algorithmic and combinatorial analysis of trapping sets in structured LDPC codes," Proc. {\em IEEE International Conference on Wireless Networks, Communications and Mobile Computing}, Hawaii, USA, June 13-16, 2005, pp. 630-635.

\bibitem{Laendner2010}
S. Laendner, T. Hehn, O. Milenkovic and J.B. Huber, ``Characterization of Small Trapping Sets in LDPC Codes from Steiner Triple Systems," Proc. {\em 6th International Symposium on Turbo Codes \& Iterative Information Processing,} Brest, France, Sept. 6 - 10, 2010, pp. 93-97.





\bibitem{KB-12}
M. Karimi and A. H. Banihashemi, ``Efficient Algorithm for Finding Dominant Trapping Sets of LDPC Codes," {\em IEEE Trans. Inform. Theory}, vol. 58, pp. 6942- 6958, Nov. 2012.



\bibitem{Kliewer2009}
C. Koller, A. Graell i Amat, J. Kliewer, and D. J. Costello, ``On trapping sets for repeat accumulate accumulate codes," Proc. {\em Inform. Theory and Applications Workshop}, San Diego, La Jolla, CA, Feb. 8-13, 2009.


\bibitem{Koller2009}
C. Koller, A. Graell i Amat, J. Kliewer, and D. J. Costello, ``Trapping
set enumerators for repeat multiple accumulate code ensembles," Proc.
\emph{IEEE International Symposium on Information Theory
(ISIT'09)}, Seoul, Korea, Jun. - Jul. 2009.



\bibitem{Koetter2004}
R. Koetter, W.-C.W. Li, P.O. Vontobel, J.L. Walker, ``Pseudo-codewords of cycle codes via zeta functions,"  Proc.
\emph{IEEE Inform. Theory Workshop,} San Antonio, TX, USA, 2004, pp. 7-12.




\bibitem{KW-12}
G. B. Kyung and C.-C. Wang, ``Finding the exhaustive list of small fully absorbing sets and designing the
corresponding low error-floor decoder," {\em IEEE Trans. Communications}, vol. 60, pp. 1487-1498, June. 2012.




\bibitem{Mao-Banihashemi-ICC-01}
Y. Mao and A.H. Banihashemi, ``A heuristic search for good low-density parity-check codes at short block lengths,"  Proc. {\em IEEE International Conference on Communications}, Helsinki, Finland, June 2001, pp. 41-44.


\bibitem{McGregor2008}
A. McGregor and O. Milenkovic, ``On the hardness of approximating stopping and trapping sets," {\em IEEE Trans. Inform. Theory}, vol. 56, no. 4, pp. 1640-1650, Apr. 2010.

\bibitem{Mackay2002}
D. J. C. MacKay and M. S. Postol, ``Weaknesses of Margulis and Ramanujan-Margulis low-density parity-check codes,"
 {\em Electronic Notes in Theoretical Computer Science},  vol. 74, 2003.


\bibitem{Milenkovict2007}
O. Milenkovic, E. Soljanin, and P. Whiting, ``Asymptotic spectra of trapping sets in regular and irregular LDPC code ensembles," {\em IEEE Trans. Inform. Theory}, vol. 53, no. 1, pp. 39-55, Jan. 2007.

%





\bibitem{Richardson2003}
T. Richardson, ``Error floors of LDPC codes," Proc. {\em  41th Annual Allerton Conference  on Communication, Control and
Computing,} Monticello, IL, Oct. 2003, pp. 1426-1435.







\bibitem{Rosnes2009}
E. Rosnes and O. Ytrehus, ``An efficient algorithm to find all small-size stopping sets of low-density parity-check matrices,"
{\em IEEE Trans. Inform. Theory}, vol. 55, no. 9, pp. 4167-4178, Sept. 2009.

\bibitem{Pusane2009}
E. Pusane, D. J. Costello, and D. G.M. Mitchell, ``Trapping set analysis
of protograph-based LDPC convolutional codes," Proc.
\emph{IEEE International Symposium on Information Theory
(ISIT'09)}, Seoul, Korea, Jun. - Jul. 2009, pp. 561�565.


\bibitem{Vasic2009}
B. Vasic, S. Chilappagari, D. Nguyen, and S. Planjery, ``Trapping set
ontology," Proc. \emph{ 47th Annual Allerton Conference  on Communication, Control and
Computing,} Monticello, IL, Sept. 2009, pp. 1-7.

%
\bibitem{Schlegel2010}
C. Schlegel and S. Zhang, ``On the Dynamics of the Error Floor Behavior in (Regular) LDPC Codes," \emph{IEEE
Trans. Inform. Theory,} vol. 56, no. 7, pp. 3248-3264, June 2010.

%




%
%
%
%
\bibitem{Vontobel2005}
P. O. Vontobel and R. Koetter, ``Graph-cover decoding and finite-length analysis of message-passing iterative
decoding of LDPC codes," {\em CoRR, arxiv.org/abs/cs/0512078}.
%


\bibitem{Wang2007}
C.-C. Wang, ``On the exhaustion and elimination of trapping sets: Algorithms \& the suppressing effect,"
Proc. \emph{ IEEE International Symposium on Inform. Theory}, Nice, France, June 24-29, 2007, pp. 2271-2275.

\bibitem{Wang2009}
C.-C. Wang, S.R. Kulkarni, H.V. Poor, ``Finding all small error-prone substructures in LDPC codes,"
{\em IEEE Trans. Inform. Theory}, vol. 55, no. 5, pp. 1976-1998, May 2009.


\bibitem{XB-07}
H. Xiao and A. H. Banihashemi, ``Estimation of bit and frame error rates of low-density parity-check codes on binary
symmetric channels," {\em IEEE Trans. Communications}, vol. 55, pp. 2234-2239, Dec. 2007.



\bibitem{XB-09}
H. Xiao and A. H. Banihashemi, ``Error rate estimation of low-density parity-check codes on binary symmetric channels using
cycle enumeration," {\em IEEE Trans. Communications}, vol. 57, pp. 1550-1555, June 2009.


\bibitem{XBK-12}
H. Xiao, A. H. Banihashemi and M. Karimi, ``Error Rate Estimation of Low-Density Parity-Check Codes Decoded by Quantized Soft-Decision Iterative Algorithms," {\em IEEE Trans. Communications}, vol. 61, pp. 474-483, Feb. 2013.


%


\bibitem{Zhang2009}
Y. Zhang and W. E. Ryan, ``Toward low LDPC-code floors: a case study,"
{\em IEEE Trans. Communications}, vol. 57, no. 6, pp. 1566-1573, June 2009.

%

\bibitem{Online}
[Online] Available: http://cs.anu.edu.au/~bdm/nauty/


\end{thebibliography}
\end{document}